\theoremstyle{definition}
\newtheorem{theorem}{Theorem}
\newtheorem{lemma}{Lemma}%[section]
\def\R{\mathbb{R}}
\def\N{\mathbb{N}}
\begin{document}

\title{Core-periphery detection in multilayer networks}%

\author{Kai Bergermann}%
\email[]{kai.bergermann@math.tu-chemnitz.de}
\affiliation{Department of Mathematics, Technische Universität Chemnitz, 09107 Chemnitz, Germany}
\author{Francesco Tudisco}
\email[]{f.tudisco@ed.ac.uk}
\affiliation{School of Mathematics, The University of Edinburgh, Edinburgh EH93FD, UK}
\affiliation{School of Mathematics, Gran Sasso Science Institute, 67100 L’Aquila, Italy}
\date{\today}%

\begin{abstract}
Multilayer networks provide a powerful framework for modeling complex systems that capture different types of interactions between the same set of entities across multiple layers. Core-periphery detection involves partitioning the nodes of a network into core nodes, which are highly connected across the network, and peripheral nodes, which are densely connected to the core but sparsely connected among themselves. In this paper, we propose a new model of core-periphery in multilayer network and a nonlinear spectral method that simultaneously detects the corresponding core and periphery structures of both nodes and layers in weighted and directed multilayer networks. Our method reveals novel structural insights in three empirical multilayer networks from distinct application areas: the citation network of complex network scientists, the European airlines transport network, and the world trade network.
\end{abstract}

\maketitle

\section{Introduction}

The study of networks is a powerful approach to understanding complex systems across various disciplines, including physics, the natural sciences, economics, engineering, and sociology \cite{watts1998collective,barabasi1999emergence,newman2003structure,boccaletti2006complex}. Multilayer network models offer the flexibility to represent different types of relationships between the same set of nodes across multiple layers \cite{kivela2014multilayer,boccaletti2014structure}. In the early 2010s, several breakthroughs established the study of multilayer networks as a distinct sub-discipline within network science \cite{mucha2010community,de2013mathematical,gomez2013diffusion,granell2013dynamical,battiston2014structural}. Since then, the generalization of concepts and techniques from single-layer networks to multilayer networks has become an active and evolving field of research.

A crucial aspect of analyzing complex networks involves examining structural properties such as community structure, metric structure, degree distributions, and centralities \cite{newman2003structure,boccaletti2006complex}. The core-periphery (also referred to as rich club or core-fringe) detection problem involves dividing the nodes of a network into two subsets: the core, which is well-connected to the entire network, and the periphery, which is densely connected to the core but sparsely connected internally. Borgatti and Everett's seminal work first formalized core-periphery structure, characterizing it as an ideal L-shape in the network's adjacency matrix \cite{borgatti2000models}. Since then, numerous methods have been proposed for detecting core-periphery structures, and such structures have been identified in various types of networks, including physical, social, computer, biological, brain, and transport networks \cite{colizza2006detecting,csermely2013structure,rombach2014core,zhang2015identification,lu2016h,battiston2018multiplex,tudisco2019nonlinear,tudisco2019fast,higham2022core,bergermann2024nonlinear,polanco2023hierarchical}.

Core-periphery detection can be formulated as a combinatorial optimization problem, the exact solution of which becomes computationally infeasible for even moderately sized networks. To address this, standard methods often rely on a relaxed formulation where one computes a positive coreness vector $\bm{x} \in \mathbb{R}_{>0}^n$, where larger values of $\bm{x}_i$ indicate a higher likelihood that node $i$ belongs to the core. Many core-periphery detection methods exist for standard, single-layer networks and it was noted in \cite{tudisco2019nonlinear, tudisco2019fast} that a large number of available approaches, including e.g.\ \cite{borgatti2000models,rombach2014core,lu2016h, boyd2010computing,mondragon2017network}, can be formalized in terms of the following objective function:
\begin{equation}\label{eq:kernel_obj_function}
 f(\bm{x}) = \sum_{i,j=1}^n \bm{A}_{ij} \kappa(\bm{x}_i,\bm{x}_j),
\end{equation}
where $\bm{A} \in \mathbb{R}_{\geq 0}^{n \times n}$ is the adjacency matrix and $\kappa : \mathbb{R} \times \mathbb{R} \to \mathbb{R}$ is a suitable kernel function.

While well-understood in the standard single-layer setting, core-periphery detection in multilayer networks remains a relatively underdeveloped area. Existing approaches, such as those based on degree counts, offer limited insights into the more intricate structures of multilayer systems \cite{battiston2018multiplex}. A recent approach introduced in \cite{bergermann2024nonlinear} proposes a nonlinear spectral method to optimize a multilayer version of \eqref{eq:kernel_obj_function}, specifically for unweighted, undirected multiplex networks without inter-layer edges. This method has demonstrated superior performance compared to degree-based approaches in several multiplex networks, including genetic, transportation, and social systems. However, extending this method to general multilayer networks that incorporate inter-layer edges is non-trivial.

In this paper, we introduce a model to detect and quantify core and periphery sets for both nodes and layers in general multilayer networks, accommodating arbitrary inter- and intra-layer edges, potentially with weights and directions. Our model is based on the maximization of a generalized core-periphery kernel function, resulting in a challenging non-convex and nonlinear optimization problem. Despite this, we show that the model can be applied in practical settings by proposing a tailored core-periphery detection algorithm that efficiently solves the optimization task and determines core and periphery sizes for both layers and nodes. We provide theoretical results demonstrating the convergence of the algorithm to either global or local optima. We validate the proposed method on three empirical networks representing collaboration, transportation, and commercial trade data, revealing novel structural insights.

\section{Multilayer core-periphery model}

Defining the notion of core and periphery in multilayer networks is a non-trivial task, as nodes may belong to the core in some layers while belonging to the periphery in others, or peripheral nodes may be weakly connected within layers but strongly connected across layers.

In this section, we propose a novel model for detecting core-periphery structures in general multilayer networks. The key idea is to simultaneously detect core structures among both nodes and layers by optimizing a coreness vector for each, using a provably convergent and efficiently implementable optimization procedure.

For the representation of multilayer networks with $n$ nodes and $L$ layers, we define a fourth-order adjacency tensor
\begin{equation*}
\mathcal{A} \in \mathbb{R}_{\geq 0}^{n \times n \times L \times L}, \quad \big[\mathcal{A}_{ij}^{k\ell}\big]_{i,j=1,\dots,n}^{k,\ell=1,\dots,L},
\end{equation*}
which encodes possibly weighted and directed edges from node $i$ in layer $k$ to node $j$ in layer $\ell$.

Our goal is to optimize entry-wise positive node and layer coreness vectors, denoted by $\bm{x} \in \mathbb{R}_{>0}^n$ and $\bm{c} \in \mathbb{R}_{>0}^L$, respectively. High values in these vectors correspond to core nodes and layers, while low values correspond to peripheral ones. To prevent the blow-up of the objective function \eqref{eq:objective_function}, we impose the norm constraints $\|\bm{x}\|_p = \|\bm{c}\|_q = 1$ using the $p$-norm and $q$-norm, respectively. The choice of $p$ and $q$ does not affect the model, but will allow us to provide convergence guarantees of the proposed computational strategy.

We extend the objective function in \eqref{eq:kernel_obj_function} to multilayer networks by summing the entries of the adjacency tensor, weighted by a kernel function that incorporates both the node and layer coreness vectors. As our kernel function $\kappa$, we select the product of two smoothed maximum kernels, which encourages at least one large node coreness score $\bm{x}_i, \bm{x}_j$ and one large layer coreness score $\bm{c}_k, \bm{c}_\ell$ whenever the edge weight $\mathcal{A}_{ij}^{k\ell}$ is significant \cite{tudisco2019nonlinear,tudisco2019fast,higham2022core,bergermann2024nonlinear}. The objective function to maximize is given by:
\begin{equation}\label{eq:objective_function}
f_{\alpha,\beta} (\bm{x},\bm{c}) = \sum_{i,j=1}^n \sum_{k,\ell=1}^L \mathcal{A}_{ij}^{k\ell} \big( \bm{x}_i^\alpha + \bm{x}_j^\alpha \big)^{1/\alpha} \big( \bm{c}_k^\beta + \bm{c}_\ell^\beta \big)^{1/\beta}.
\end{equation}
Here, the scalar parameters $\alpha, \beta > 1$ control the smoothness of the kernel. In the limit $\alpha \to \infty$, the term $(\bm{x}_i^\alpha + \bm{x}_j^\alpha)^{1/\alpha}$ converges to $\max\{\bm{x}_i,\bm{x}_j\}$, yielding the standard maximum kernel. Notably, when $\alpha = \beta$, the roles of $\bm{x}$ and $\bm{c}$ are interchanged upon transforming the original network into its dual, as explored in Ref.~\cite{presigny2024node}. The choice $\alpha\neq\beta$, however, allows for additional model flexibility.

\begin{figure}
\begin{algorithm}[H]
\raggedright{
	\begin{tabular}{lll}
		%		\vspace{1mm}
		\textbf{Input}:
		& $\mathcal{A}\in\R^{n \times n \times L \times L}_{\geq 0},$ & Adjacency tensor.\\%\vspace{1mm}
		& $\bm{x}_0\in\R^n_{>0},$ & Initial node coreness vector.\\
		& $\bm{c}_0\in\R^L_{>0},$ & Initial layer coreness vector.\\
	\end{tabular}\\
	\begin{tabular}{ll}
		\textbf{Parameters}: & $\alpha,\beta,p,q>1, \mathrm{tol}\in\R_{>0}, \mathrm{maxIter}\in\N.$
	\end{tabular}
	\begin{algorithmic}[1]
		\State $\bm{x}_0 =\bm{x}_0/\| \bm{x}_0 \|_{\frac{p}{p-1}}$
		\State $\bm{c}_0 = \bm{c}_0/\| \bm{c}_0 \|_{\frac{q}{q-1}}$
		\For{$k=1:\mathrm{maxIter}$}
		%		\vspace{1mm}
		\State $\widetilde{\bm{x}} = \nabla_{\bm{x}} f_{\alpha,\beta}(\bm{x}_{k-1}, \bm{c}_{k-1})$
		%		\vspace{1mm}
		\State $\bm{x}_{k} = \left( \widetilde{\bm{x}}/\| \widetilde{\bm{x}} \|_{\frac{p}{p-1}} \right)^{\frac{1}{p-1}}$
		%		\vspace{1mm}
		\State $\widetilde{\bm{c}} = \nabla_{\bm{c}} f_{\alpha,\beta}(\bm{x}_{k-1}, \bm{c}_{k-1})$
		%		\vspace{1mm}
		\State $\bm{c}_{k} = \left( \widetilde{\bm{c}}/\| \widetilde{\bm{c}} \|_{\frac{q}{q-1}} \right)^{\frac{1}{q-1}}$
		%		\vspace{1mm}
		\If{$\| \bm{x}_{k} - \bm{x}_{k-1} \| < \mathrm{tol}$ \textbf{and} $\| \bm{c}_{k} - \bm{c}_{k-1} \| < \mathrm{tol}$}
		\State \textbf{break}
		\EndIf
		\EndFor
	\end{algorithmic}
	%	\vspace{1em}
	\begin{tabular}{lll}
		%		\vspace{1mm}
		\textbf{Output}: & $\bm{x}\in\mathcal{S}_p^+,$ & Optimized node coreness vector.\\
		& $\bm{c}\in\mathcal{S}_q^+,$ & Optimized layer coreness vector.
	\end{tabular}
}
\caption{Nonlinear spectral method for core-periphery detection in multilayer networks.}\label{alg}
\end{algorithm}
\end{figure}

To enforce the norm constraints, we normalize the objective function \eqref{eq:objective_function} by dividing by $\|\bm{x}\|_p \|\bm{c}\|_q$. The optimization problem is then solved by setting the gradients with respect to both $\bm{x}$ and $\bm{c}$ to zero, yielding the following fixed-point equations:
\begin{equation}\label{eq:fixed_point_equations}
\bm{x} = J_{p^\ast} (\nabla_{\bm{x}} f_{\alpha,\beta}(\bm{x},\bm{c})), \quad \bm{c} = J_{q^\ast} (\nabla_{\bm{c}} f_{\alpha,\beta}(\bm{x},\bm{c})),
\end{equation}
using similar arguments as in Refs.~\cite{tudisco2019nonlinear,bergermann2024nonlinear}. Here, $J_p(\bm{x}) = \|\bm{x}\|_p^{1-p} \bm{x}^{p-1}$ is the gradient of the $p$-norm, and $p^\ast$ is the H\"older conjugate of $p$, defined by $1/p + 1/p^\ast = 1$.

Our proposed method consists of alternating fixed-point iterations of the two equations in \eqref{eq:fixed_point_equations}, which are summarized in \Cref{alg}. This method can be efficiently implemented in a few lines of code. We refer to it as a nonlinear spectral method, as it can be interpreted as a power iteration for a generalized nonlinear eigenvalue problem \cite{tudisco2019nonlinear,bergermann2024nonlinear}.

The global convergence of \Cref{alg} is ensured by nonlinear and multilinear Perron-Frobenius theory \cite{gautier2019contractivity,gautier2019perron,gautier2023nonlinear}. Specifically, we define the following $2 \times 2$ coefficient matrix:
\begin{equation}\label{eq:coefficient_matrix}
\bm{M} = \begin{bmatrix}
\frac{2|\alpha-1|}{p-1} & \frac{2}{p-1} \\
\frac{2}{q-1} & \frac{2|\beta-1|}{q-1}
\end{bmatrix},
\end{equation}
and its spectral radius $\rho(\bm{M})$. We can then state the following convergence result:
\begin{theorem}\label{thm:unique_solution}
    For parameters $\alpha, \beta, p, q > 1$ such that
    \begin{equation*}
    \rho(\bm{M}) < 1,
    \end{equation*}
    \Cref{alg} converges to the global maximum of $f_{\alpha,\beta}$ defined in \eqref{eq:objective_function}, for any initial vectors $\bm{x}_0 \in \mathbb{R}_{>0}^n$ and $\bm{c}_0 \in \mathbb{R}_{>0}^L$, with a linear rate of convergence.
\end{theorem}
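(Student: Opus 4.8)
The plan is to interpret \Cref{alg} as a fixed-point iteration for the map $\mathcal{G}(\bm{x},\bm{c}) = (G_1(\bm{x},\bm{c}), G_2(\bm{x},\bm{c}))$ on $\mathcal{S}_p^+ \times \mathcal{S}_q^+$, where $G_1(\bm{x},\bm{c}) = J_{p^\ast}(\nabla_{\bm{x}} f_{\alpha,\beta}(\bm{x},\bm{c}))$ and $G_2(\bm{x},\bm{c}) = J_{q^\ast}(\nabla_{\bm{c}} f_{\alpha,\beta}(\bm{x},\bm{c}))$ are exactly the updates in lines~5 and~7, and then to show that $\mathcal{G}$ is a contraction on the interior cone $\mathbb{R}^n_{>0}\times\mathbb{R}^L_{>0}$ with respect to a suitably weighted Hilbert projective metric whose contraction rate is $\rho(\bm{M})$. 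Since $f_{\alpha,\beta}$ is positively homogeneous of degree one in each of $\bm{x}$ and $\bm{c}$, its gradients are homogeneous of degree zero in their own variable and degree one in the other; normalizing by $J_{p^\ast}$, $J_{q^\ast}$ returns to the spheres, so a fixed point of $\mathcal{G}$ is precisely a solution of \eqref{eq:fixed_point_equations}.

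The analytic heart is a componentwise Lipschitz estimate of $\mathcal{G}$ in the Hilbert metric, from which $\bm{M}$ emerges. First I would pass to logarithmic coordinates and bound the log–log Jacobian of the gradient map. The smoothed-maximum factor $(\bm{x}_i^\alpha+\bm{x}_j^\alpha)^{1/\alpha}$ is the only source of $\bm{x}$-dependence in $\nabla_{\bm{x}} f_{\alpha,\beta}$; differentiating its logarithm shows that each term contributes a log-derivative $+(\alpha-1)\,\bm{x}_j^\alpha/(\bm{x}_i^\alpha+\bm{x}_j^\alpha)$ in the diagonal direction and the opposite sign in the off-diagonal direction, so that summing absolute values over all coordinates yields a self-influence bound proportional to $2|\alpha-1|$, while the degree-one dependence on $\bm{c}$ yields the cross-influence constant. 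The maps $J_{p^\ast}, J_{q^\ast}$ are order-preserving and homogeneous of degree $p^\ast-1 = 1/(p-1)$ and $q^\ast-1=1/(q-1)$, hence scale the Hilbert metric by exactly these factors. Collecting the four coefficients reproduces $\bm{M}$ in \eqref{eq:coefficient_matrix}.

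With these block coefficients in hand, I would invoke the multihomogeneous Perron–Frobenius and contractivity theory of \cite{gautier2019contractivity,gautier2019perron,gautier2023nonlinear}: taking the entrywise-positive Perron eigenvector of $\bm{M}$ as weights $(w_1,w_2)$ and defining $d\big((\bm{x},\bm{c}),(\bm{x}',\bm{c}')\big) = \max\{w_1 d_H(\bm{x},\bm{x}'),\, w_2 d_H(\bm{c},\bm{c}')\}$, the estimates above give $d(\mathcal{G}(\bm{x},\bm{c}),\mathcal{G}(\bm{x}',\bm{c}')) \le \rho(\bm{M})\, d((\bm{x},\bm{c}),(\bm{x}',\bm{c}'))$. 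The hypothesis $\rho(\bm{M})<1$ then makes $\mathcal{G}$ a strict contraction on a complete metric space (the normalized positive cone is complete under the Hilbert metric), so the Banach fixed-point theorem yields a unique positive fixed point together with global convergence of the iterates of \Cref{alg} from any $\bm{x}_0\in\mathbb{R}^n_{>0}$, $\bm{c}_0\in\mathbb{R}^L_{>0}$, at a linear rate.

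It remains to identify this unique fixed point with the global maximizer of $f_{\alpha,\beta}$. Here I would argue that $f_{\alpha,\beta}$, being continuous on the compact set $\mathcal{S}_p^+\times\mathcal{S}_q^+$, attains its maximum; that any maximizer is entrywise positive (a coordinate set to zero can be strictly increased by monotonicity of the kernel in each argument, whenever the corresponding node or layer carries an edge); and that at such an interior maximizer the first-order optimality conditions for the norm-constrained problem are exactly \eqref{eq:fixed_point_equations}. Thus the global maximizer is a positive fixed point of $\mathcal{G}$ and, by uniqueness, coincides with the limit of the algorithm. I expect the main obstacle to be the contraction estimate itself: because $\nabla_{\bm{x}} f_{\alpha,\beta}$ is \emph{not} order-preserving (it increases in $\bm{x}_i$ but decreases in $\bm{x}_j$), one cannot appeal to the classical Birkhoff contraction and must instead control the full log–log Jacobian to obtain the coefficients of $\bm{M}$ with the correct constants; a secondary technical point is ensuring that $\mathcal{G}$ maps the open cone into itself, which requires a mild connectivity/positivity assumption on $\mathcal{A}$ so that all gradient components stay strictly positive and the Hilbert metric remains finite.
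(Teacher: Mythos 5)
Your proposal is correct and takes essentially the same approach as the paper: it interprets \Cref{alg} as a fixed-point iteration of $G=(J_{p^\ast}(\nabla_{\bm{x}} f_{\alpha,\beta}),\,J_{q^\ast}(\nabla_{\bm{c}} f_{\alpha,\beta}))$, derives the coefficient matrix $\bm{M}$ from blockwise log-Jacobian bounds, invokes the multihomogeneous Perron--Frobenius contraction theory of the cited references together with Banach's fixed-point theorem to get a unique positive fixed point and linear convergence at rate $\rho(\bm{M})$, and identifies that fixed point with the global maximizer via the first-order conditions \eqref{eq:fixed_point_equations}. The only cosmetic difference is the choice of metric (the paper contracts in a Thompson metric, you in a Perron-weighted Hilbert projective metric) and minor bookkeeping of where the factors of $2$ arise; both variants are supported by the same cited framework.
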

\begin{proof}
    The proof is provided in \Cref{sec:proof}.
\end{proof}

It has been observed that for single-layer and multiplex networks, selecting relatively small values of $p$ and $q$ may yield core-periphery structures closer to the ideal L-shape \cite{bergermann2024nonlinear}.
As such values do not necessarily satisfy the assumption of \Cref{thm:unique_solution}, the global convergence of \Cref{alg} cannot be guaranteed in general.
In this situation, however, convergence to a local optimum is still ensured by a previous result \cite[Thm.~3.2]{bergermann2024nonlinear}.

\begin{figure*}[t]
	\subfloat[Original supra-adjacency]{
		\includegraphics[width=0.19\textwidth]{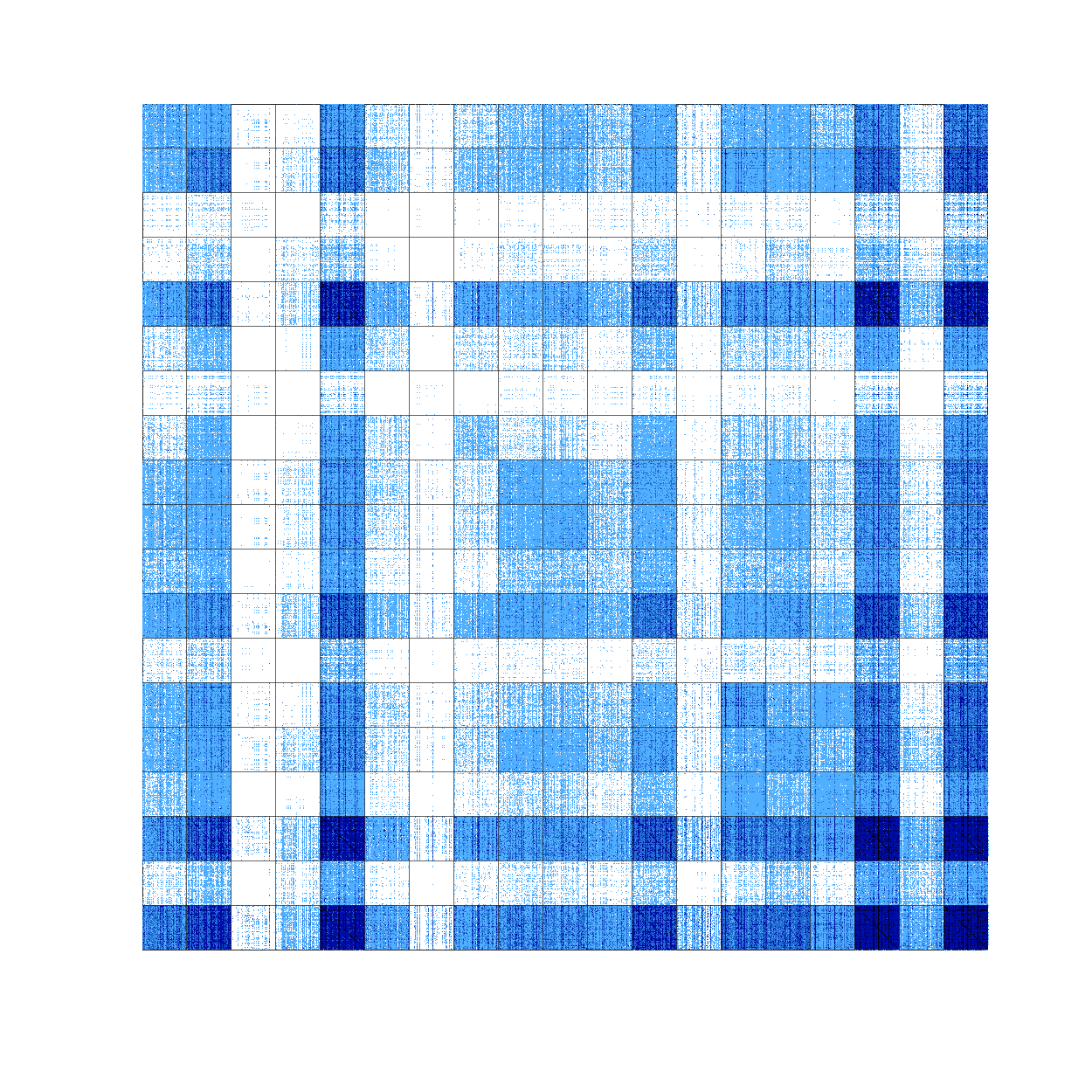}\label{fig:OA_spy_plots_original}
	}
	\subfloat[Full permuted supra-adjacency for $p=q=22$]{
		\includegraphics[width=0.19\textwidth]{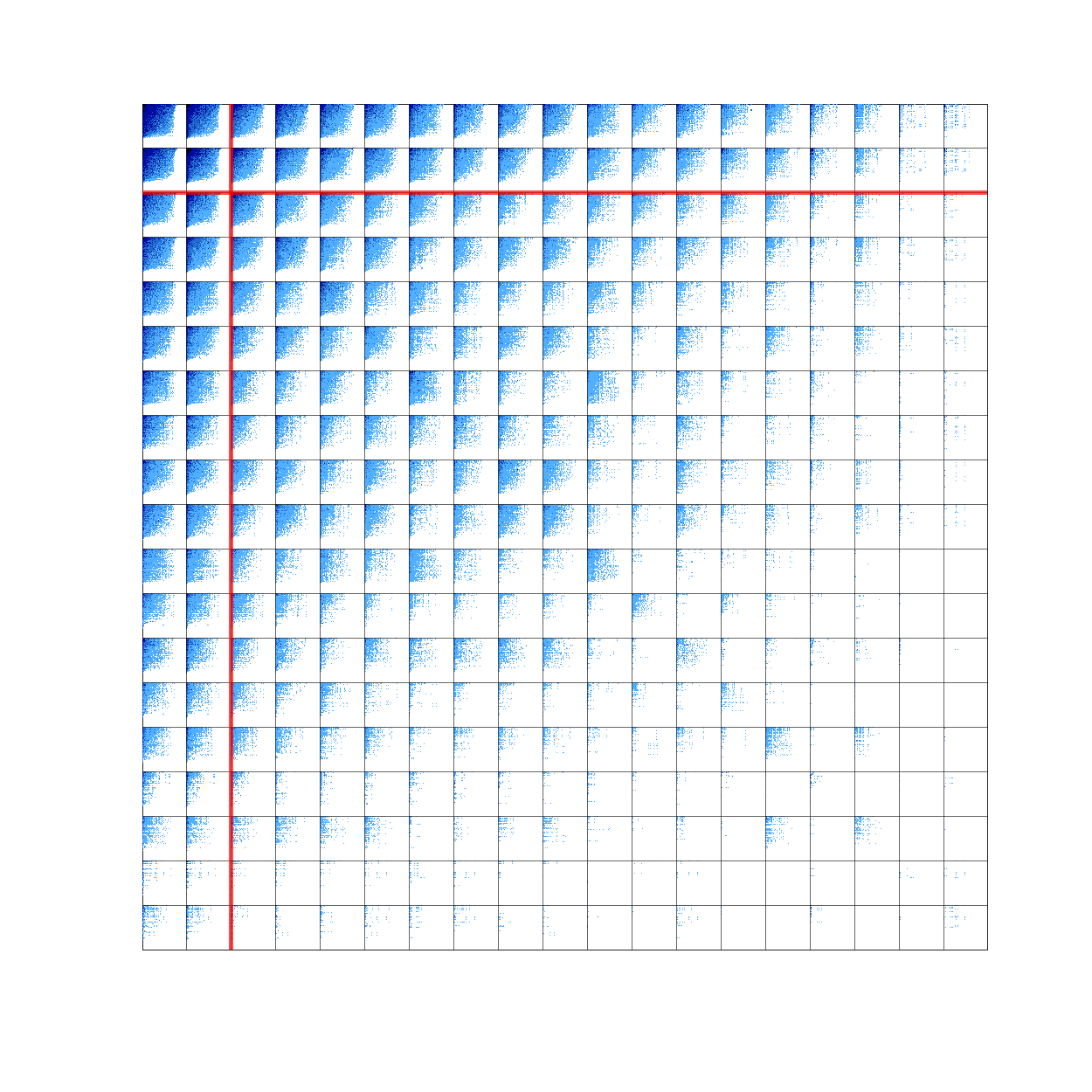}
	}
	\subfloat[Upper left $5\times 5$ blocks of permuted supra-adjacency for $p=q=22$]{
		\includegraphics[width=0.19\textwidth]{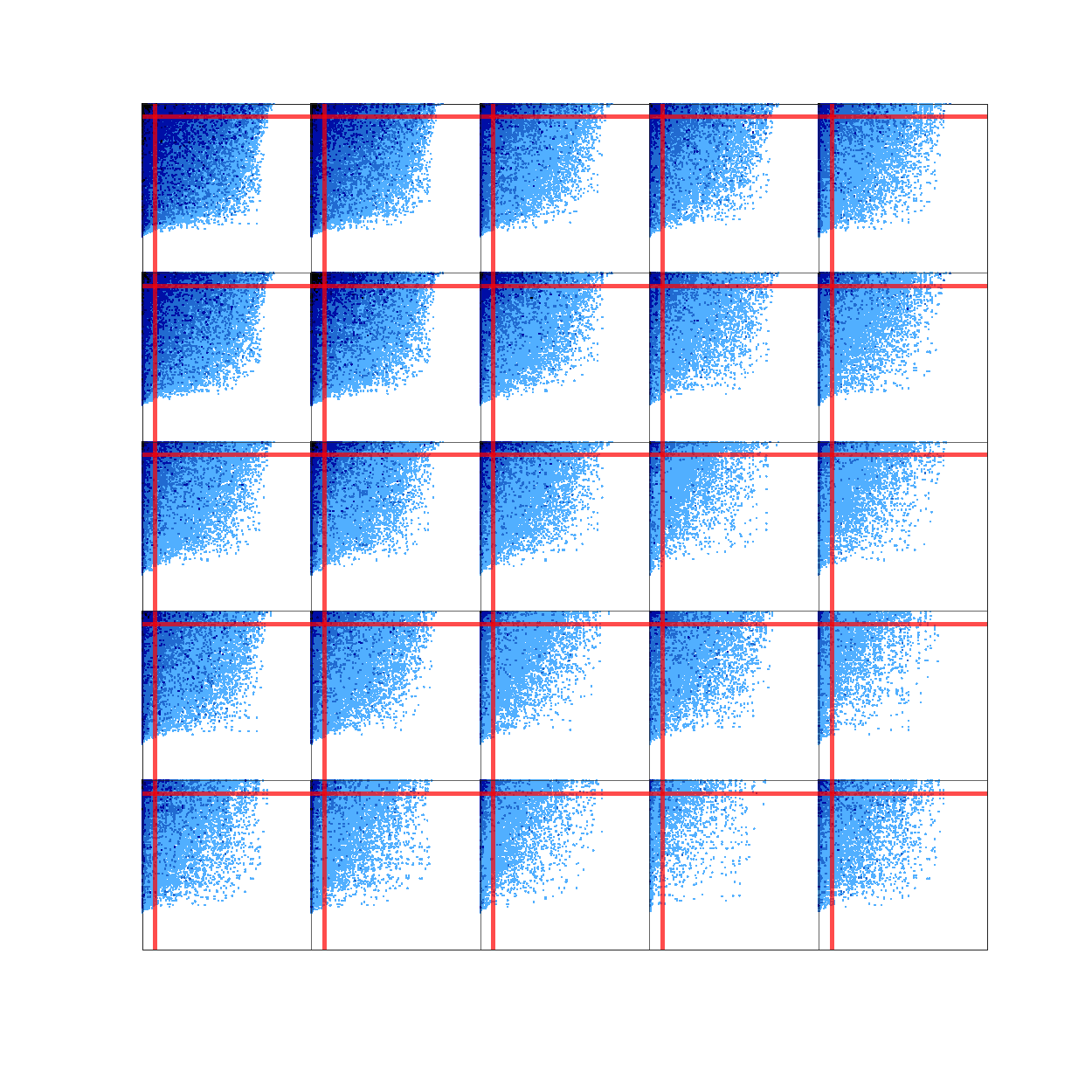}
	}
	\subfloat[Full permuted supra-adjacency for $p=q=2$]{
		\includegraphics[width=0.19\textwidth]{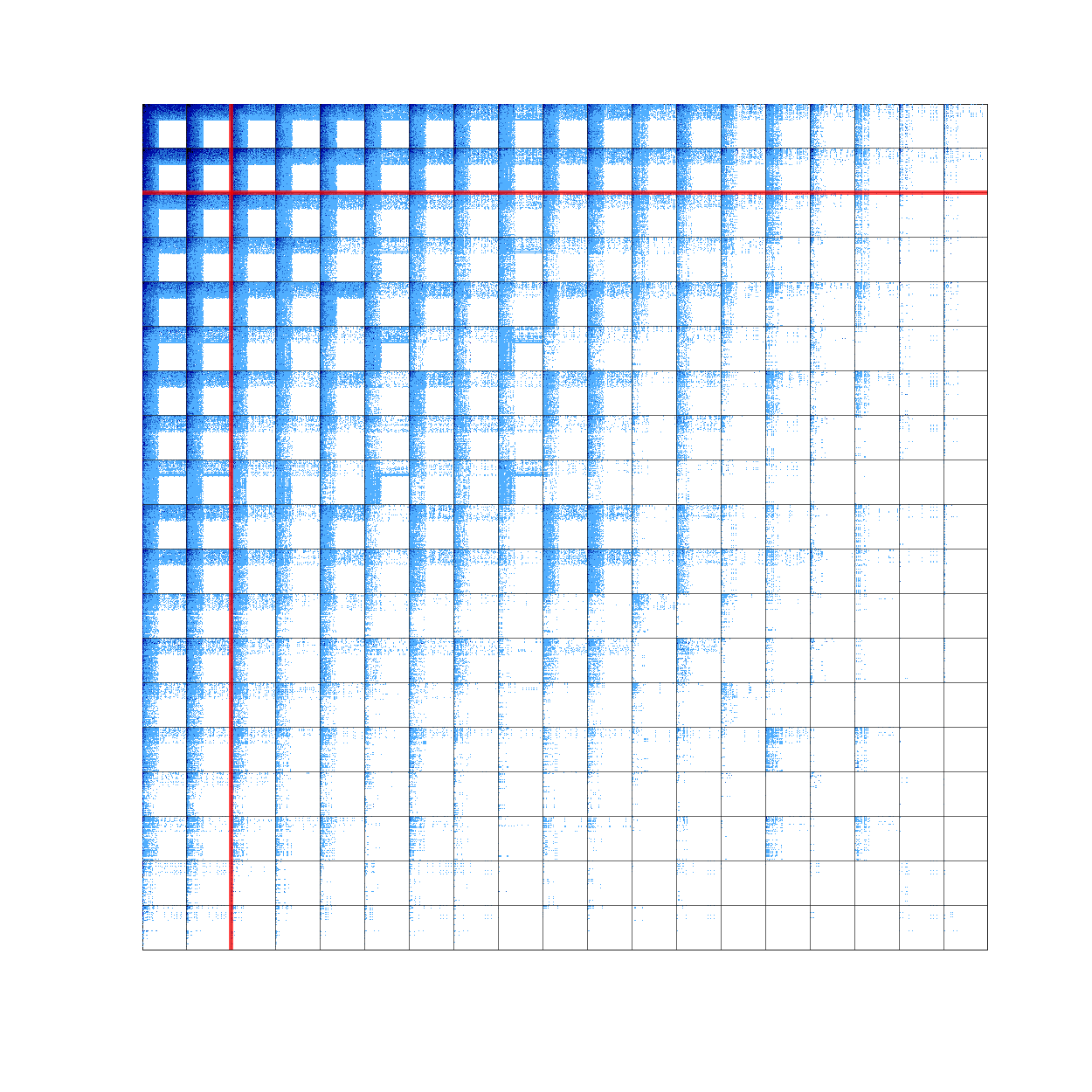}
	}
	\subfloat[Upper left $5\times 5$ blocks of permuted supra-adjacency for $p=q=2$]{
		\includegraphics[width=0.19\textwidth]{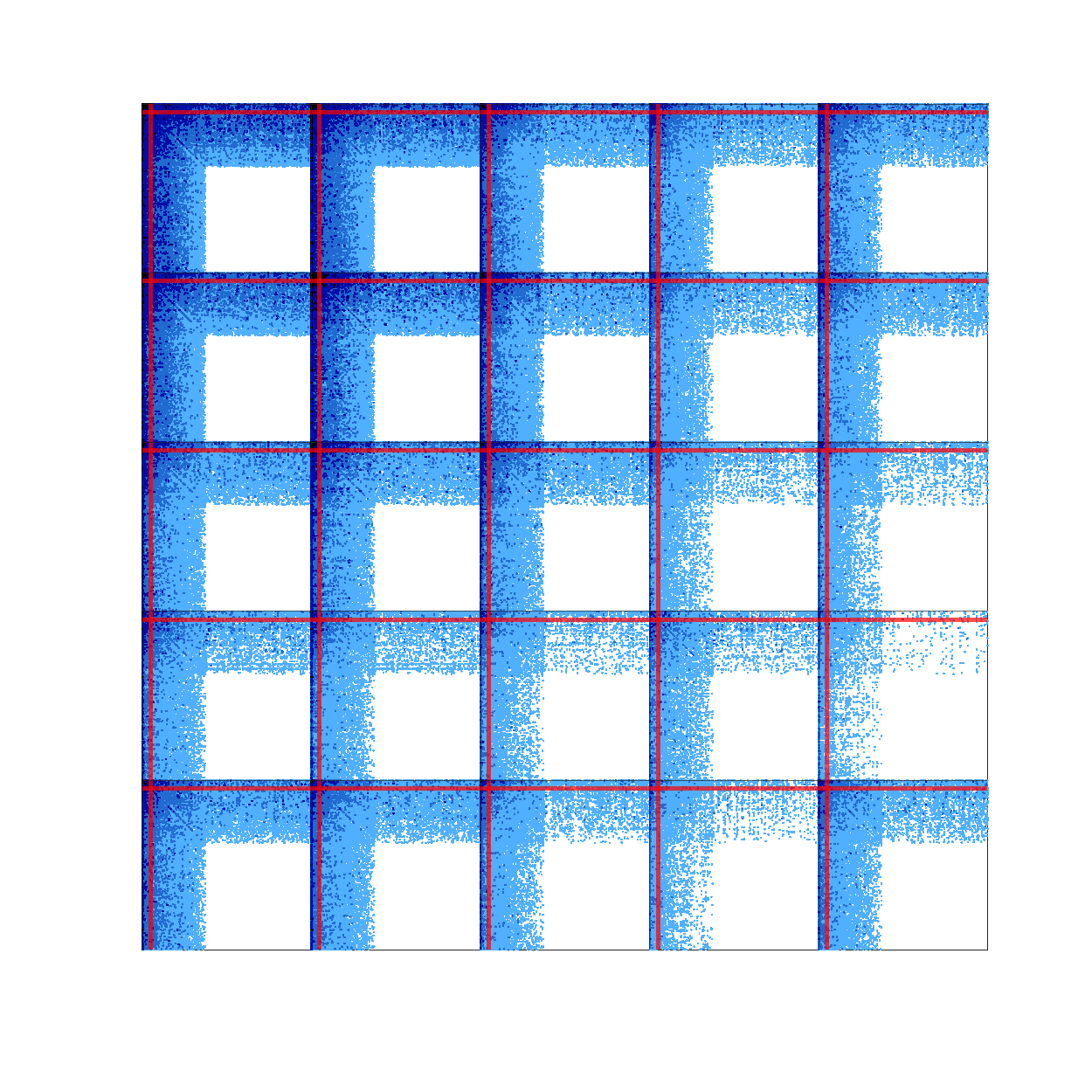}
	}
	\caption{Supra-adjacency matrix plots for the weighted OpenAlex multilayer citation network of complex network scientists of the year 2023.
	Dark fonts indicate large edge weights.
	Red lines in panels b) and d) indicate the layer core sizes $s^\ast_{\mathrm{layer}}=2$ and red lines in panels c) and e) indicate the node core sizes $s^\ast_{\mathrm{node}}=4\,058$ and $s^\ast_{\mathrm{node}}=2\,566$, respectively, in each block.}\label{fig:OA_spy_plots}
\end{figure*}

\pgfkeys{/pgf/number format/.cd,1000 sep={}}
\begin{figure*}[t]
    \includegraphics[width=.98\textwidth]{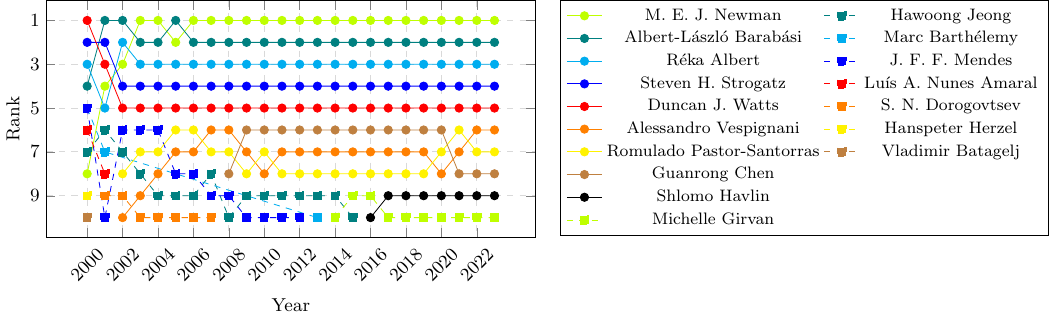}
	\caption{Top $10$ authors by node coreness score in the weighted OpenAlex citation multilayer network over the years $2000$ to $2023$ for the parameters $p=q=22$.}\label{fig:OA_author_rankings}
\end{figure*}

\section{Determining core sizes}\label{sec:core_size}

Once the node and layer coreness vectors $\bm{x}$ and $\bm{c}$ have been optimized, the next challenge is determining where to draw the boundary between core and periphery.

In single-layer networks, core-periphery structures are often visualized by permuting the rows and columns of the adjacency matrix $\bm{A} \in \mathbb{R}_{\geq 0}^{n \times n}$ in descending order of the nodes' coreness scores. A recent method proposes identifying the core size by sweeping over the permuted adjacency matrix and comparing it to Borgatti and Everett's idealized L-shape model for different core sizes \cite{higham2022core}.

We extend this approach to multilayer networks by generalizing the single-layer L-shape model. The multilayer network is represented by flattening the adjacency tensor into the supra-adjacency matrix, a block matrix of size $L \times L$, where each block is of size $n \times n$. Block $(k,\ell)$ represents edges between nodes in layer $k$ and layer $\ell$ \cite{mucha2010community,gomez2013diffusion,de2013mathematical,kivela2014multilayer}. To facilitate comparison, we normalize the supra-adjacency matrix by dividing it by its largest entry, allowing edge weights of $1$ to represent present edges, $0$ missing edges, and weights in $(0,1)$ partially present edges.

We propose a two-level permutation of the supra-adjacency matrix. First, the block rows and columns are permuted using the layer coreness vector $\bm{c}$, while each block is further permuted according to the node coreness vector $\bm{x}$, as done in the single-layer case. A strong multilayer core-periphery structure is characterized by dense blocks forming an L-shape at the block level, with each block itself also exhibiting an inner L-shape. \Cref{fig:OA_spy_plots} provides an illustrative example.

To identify the core size in single-layer networks, the sweeping procedure evaluates $n$ binary vectors $\bar{\bm{x}} \in \{0,1\}^n$, where each binary vector $s = 1,\dots,n$ contains non-zeros in the first $s$ entries \cite{higham2022core}. Given the permuted binary adjacency matrix $\bm{A}_{ij}$, the objective function
\begin{equation}\label{eq:QUBO_single_layer}
\sum_{i,j=1}^n \left( \frac{\bm{A}_{ij}}{n_1} \max \{ \bar{\bm{x}}_i, \bar{\bm{x}}_j\} + \frac{1 - \bm{A}_{ij}}{n_2} (1-\max \{ \bar{\bm{x}}_i, \bar{\bm{x}}_j\})\right),
\end{equation}
is evaluated for each $\bar{\bm{x}}$, where $n_1 = \sum_{i,j=1}^n \bm{A}_{ij}$ is the total number of present edges, and $n_2 = n^2 - n_1$ is the number of absent edges. Equation \eqref{eq:QUBO_single_layer} compares the adjacency matrix with an ideal L-shape model for core size $s$, with present edges in the first $s$ rows and columns, and absent edges in the bottom-right $(n-s) \times (n-s)$ block. The normalization factors $n_1$ and $n_2$ ensure that both contributions are weighted according to the overall structure of the network. The core size $s^\ast$ is selected as the value of $s$ that maximizes \eqref{eq:QUBO_single_layer} \cite{higham2022core,bergermann2024nonlinear}.

For multilayer networks, we aim to detect both the node core size $s^\ast_{\mathrm{node}}$ and the layer core size $s^\ast_{\mathrm{layer}}$ via separate sweeping procedures. To determine the node core size, we compute the similarity of each block in the permuted supra-adjacency matrix to an ideal L-shape using the layer coreness values $\gamma_{k\ell} = \max\{\bm{c}_k,\bm{c}_\ell\}$ as weights. Using the same binary vectors $\bar{\bm{x}} \in \{0,1\}^n$ as in the single-layer case, the objective function is:
\begin{equation}\label{eq:QUBO_node}
\sum_{k,\ell=1}^L \gamma_{k\ell} \sum_{i,j=1}^n \left( \frac{\mathcal{A}_{ij}^{k\ell}}{n_1^{(k\ell)}} \bar{\chi}_{ij} + \frac{1 - \mathcal{A}_{ij}^{k\ell}}{n_2^{(k\ell)}} (1-\bar{\chi}_{ij}) \right),
\end{equation}
where $n_1^{(k\ell)} = \sum_{i,j=1}^n \mathcal{A}_{ij}^{k\ell}$, $n_2^{(k\ell)} = n^2 - n_1^{(k\ell)}$, and $\bar{\chi}_{ij} = \max\{ \bar{\bm{x}}_i, \bar{\bm{x}}_j \}$. The node core size $s^\ast_{\mathrm{node}}$ is chosen as the number of non-zero entries in $\bar{\bm{x}}$ that maximizes \eqref{eq:QUBO_node}.

For the layer core size, we perform a similar sweep over binary layer coreness vectors $\bar{\bm{c}} \in \{0,1\}^L$ and define the corresponding objective function as:
\begin{equation}\label{eq:QUBO_layer}
\sum_{i,j=1}^n \chi_{ij} \sum_{k,\ell=1}^L \left( \frac{\mathcal{A}_{ij}^{k\ell}}{n_1^{(ij)}} \bar{\gamma}_{k\ell} + \frac{1 - \mathcal{A}_{ij}^{k\ell}}{n_2^{(ij)}} (1-\bar{\gamma}_{k\ell}) \right),
\end{equation}
where $n_1^{(ij)} = \sum_{k,\ell=1}^L \mathcal{A}_{ij}^{k\ell}$, $n_2^{(ij)} = L^2 - n_1^{(ij)}$, $\chi_{ij} = \max \{ \bm{x}_i, \bm{x}_j\}$, and $\bar{\gamma}_{k\ell} = \max \{ \bar{\bm{c}}_k, \bar{\bm{c}}_\ell\}$. This respects the symmetry between nodes and layers imposed by \eqref{eq:objective_function}.

Normalizing \eqref{eq:QUBO_node} by $\frac{1}{\sum_{k,\ell=1}^L \gamma_{k\ell}}$ and \eqref{eq:QUBO_layer} by $\frac{1}{\sum_{i,j=1}^n \chi_{ij}}$ ensures that the objective function values lie in the range $[-1, 1]$, where larger values indicate stronger multilayer core-periphery structures, with configurations closer to the ideal two-level L-shape. Note that the described procedure relies on the node and layer coreness rankings generated by \Cref{alg} that are used to obtain the permuted supra-adjacency matrix. Details on the efficient numerical evaluation of \eqref{eq:QUBO_node} and \eqref{eq:QUBO_layer} are provided in Section V of the supplementary materials.

\section{Numerical results on empirical multilayer networks}

We test \Cref{alg} on three real-world multilayer networks: the citation network of complex network scientists, the European airlines transport network, and the world trade network. In addition, we report results on synthetic networks in \Cref{sec:experiments_synthetic}. In all experiments, we use the parameters $\alpha = \beta = 10$ \footnote{We found a value of $\alpha=\beta=10$ to be a reasonable compromise between the two objectives of the kernels in \eqref{eq:objective_function} to closely match the maximum kernel and not being too restrictive in terms of the choice of the parameters $p$ and $q$ with respect to the condition $\rho(\bm{M})<1$ in \Cref{thm:unique_solution}.}, \texttt{tol} = $10^{-8}$, \texttt{maxIter} = $200$, with initial vectors $\bm{x}_0 = \bm{1} \in \mathbb{R}^n$ and $\bm{c}_0 = \bm{1} \in \mathbb{R}^L$, where $\bm{1}$ denotes the vector of all ones. A \texttt{julia} implementation is publicly available under \url{https://github.com/COMPiLELab/MLCP} and all the network data is publicly available under \url{https://doi.org/10.5281/zenodo.14231869}.

\subsection{Citation multilayer network of Complex Networks scientists}
We constructed the citation network of complex network scientists using publicly available data from Open\-Alex \cite{priem2022openalex}. We collected data on all recorded complex network science publications by filtering for the level-2 concept \cite{openalex_note} ``Complex Network'', resulting in $38,346$ works published before January 1, 2024. After author disambiguation to handle hyphenated names, we identified $n = 53,423$ unique authors, representing the nodes in the multilayer network. The layers are instead formed by the level-0 concepts in the OpenAlex topics hierarchy. We detected $L = 19$ level-0 concepts, including ``Computer Science,'' ``Mathematics,'' and ``Physics''.

Edges originate from one complex network science paper citing another. Directed edges were inserted from all citing authors and disciplines associated with the citing paper toward all cited authors and disciplines of the cited paper, resulting in $1.55 \cdot 10^7$ non-zero entries in the adjacency tensor $\mathcal{A}$. We weighted citations such that the total edge weight of one paper citing another is $1$, with each citation weight divided by the product of the number of citing authors, citing disciplines, cited authors, and cited disciplines. Additional results, including analyses on ``unweighted'' multilayer networks (which omit the citation weighting), are provided in Section \ref{sec:results_OA}. Both weighting approaches yield sparse networks with $\mathcal{O}(nL)$ edges.

\Cref{fig:OA_spy_plots} shows the original as well as the permuted supra-adjacency matrices of the weighted and directed Open\-Alex citation network. Both the outer and the inner permutation, i.e., the permutation of and within blocks shows a strong concentration of edges with large weights in the upper left corner of supra-adjacency matrix and all individual blocks, respectively. Using $\alpha = \beta = 10$ in \Cref{alg}, the parameters $p = q = 22$ satisfy the assumptions of \Cref{thm:unique_solution}, ensuring convergence to the global optimum. With $p = q = 2$, convergence is only guaranteed to a local optimum, yet the intra-block reorderings exhibit strong L-shapes. Both parameter settings detect $s^\ast_{\mathrm{layer}} = 2$ disciplines, ``Computer Science'' and ``Mathematics,'' as the layer core. For $p = q = 22$, $s^\ast_{\mathrm{node}} = 4,058$ authors belong to the node core, whereas for $p = q = 2$, $s^\ast_{\mathrm{node}} = 2,566$. Despite the large problem size of $n = 53,423$ and $L = 19$, \Cref{alg} required only $135$ seconds on a standard laptop, completing in 26 iterations for $p = q = 22$.

We also constructed a sequence of adjacency tensors for publications up to the years $2000$ to $2023$. \Cref{fig:OA_author_rankings} shows the evolution of the top $10$ author coreness rankings over this period.
After strong reorderings following several landmark publications in the early $2000$s, the top $5$ core authors remain constant from year $2006$ onward.

\begin{figure}[t]
	\begin{center}
	\includegraphics[width=0.4\textwidth,clip,trim=330pt 80pt 290pt 90pt]{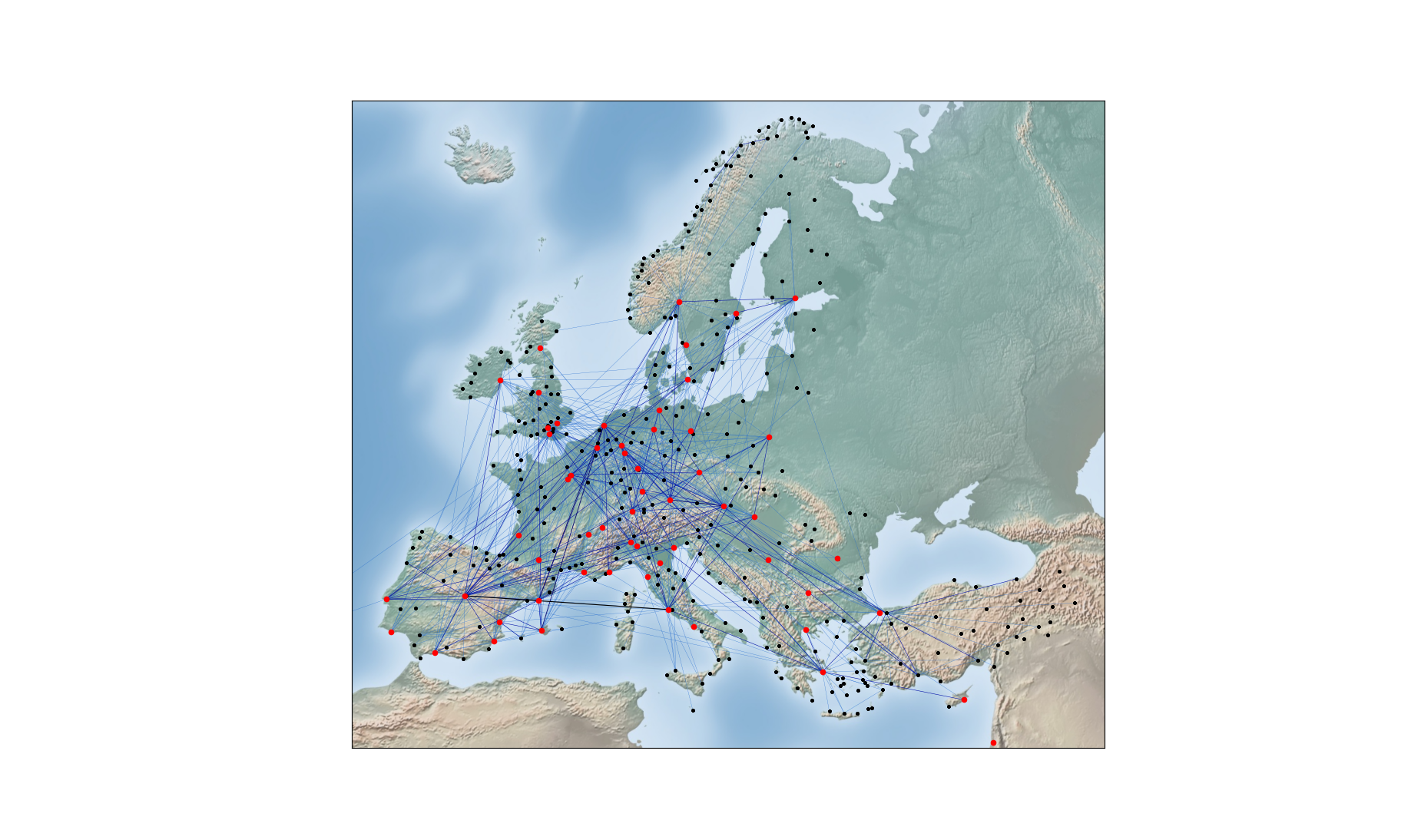}
	\end{center}
	\caption{Plot of the aggregated European Airlines network.
		Core airports are marked red while periphery airports are marked black.
		Only edges with aggregated intra-layer edge weight larger than $1$ are recorded.
		Dark fonts indicate large edge weights.
		The figure was created with matplotlib's basemap library.}\label{fig:european_airlines}
\end{figure}

\subsection{European airlines multilayer network}\label{sec:numerics_EUAir}
We analyze the largest connected component of the European Airlines (EUAir) network as of 2012 \cite{cardillo2013emergence}. This multiplex network consists of $n = 417$ airports and $L = 37$ airlines. Intra-layer edges are undirected and unweighted, indicating the existence of flight connections between airports for a given airline. Inter-layer edges follow the approach in Ref.~\cite{bergermann2021orientations}, where nodes (airports) in different layers (airlines) are connected if the airport is served by both airlines, allowing passengers to switch between airlines.

We ran \Cref{alg} with $p = q = 22$, detecting $s^\ast_{\mathrm{node}} = 57$ core airports (see Table \ref{tab:euair_airport_core}) and $s^\ast_{\mathrm{layer}} = 4$ core airlines: ``Lufthansa,'' ``easyJet,'' ``Ryanair,'' and ``Air Berlin.'' \Cref{fig:european_airlines} illustrates the core airports alongside intra-layer connections with aggregated edge weights greater than $1$.

An interesting observation is that Ryanair, ranked highly by several centrality measures \cite{bergermann2022fast}, is the third-highest layer by coreness score for $p = q = 22$, but drops to rank $18$ out of $L = 37$ layers for $p = q = 2$. This reflects Ryanair's tendency to serve peripheral airports, despite having numerous connections.

\begin{figure}[t]
        \includegraphics[width=.49\textwidth]{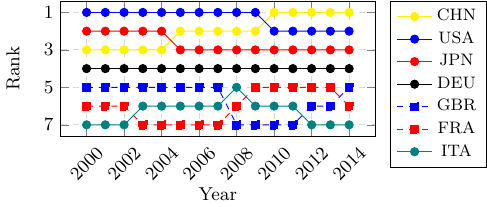}
		\caption{Rankings of country cores by layer coreness scores over the years $2000$ to $2014$ in the WIOD world trade multilayer network.}\label{fig:WIOD_cores_country}
\end{figure}
\begin{figure}[t]
    \includegraphics[width=.49\textwidth]{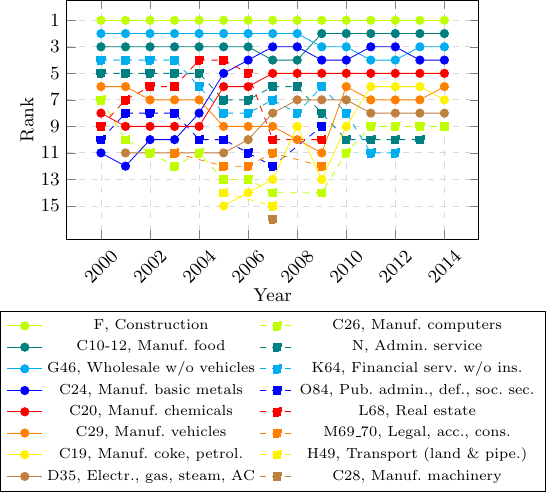}
	\caption{Rankings of industry cores by node coreness scores over the years $2000$ to $2014$ in the WIOD world trade multilayer network.}\label{fig:WIOD_cores_industry}
\end{figure}

\subsection{World trade multilayer network}
We consider the world trade network, recorded in the World Input Output Database (WIOD) \cite{timmer2015illustrated,wiod2021} for the years 2000 to 2014. Nodes correspond to $n = 56$ industries, and layers correspond to $L = 43$ countries. A separate multilayer network was constructed for each year. Directed and weighted edges represent trade volumes between industries in different countries, measured in millions of USD. Although the network is not fully connected, it is dense with $\mathcal{O}(n^2 L^2)$ edges, but the edge weights vary widely.

\Cref{fig:WIOD_cores_country,fig:WIOD_cores_industry} illustrates the evolution of country and industry cores from 2000 to 2014, showing their coreness score rankings. \Cref{fig:WIOD_cores_country} shows a consistent country core of $s^\ast_{\mathrm{layer}} = 7$ countries over all years, with notable shifts in rankings, such as China's rise to the top position. In contrast, \Cref{fig:WIOD_cores_industry} shows that the industry core size $s^\ast_{\mathrm{node}}$ varies between $9$ and $16$, with a sharp drop from $16$ in 2007 to $10$ in 2008, reflecting the global financial crisis. Noteworthy trends include the increasing coreness of the ``Manufacturing of Basic Metals'' industry and the decreasing coreness of ``Financial Services'' over the years.

\section{Conclusion}

In this work, we proposed a new model of core-periphery structure in general multilayer networks along with a provably convergent nonlinear spectral method for detecting the modeled cores and peripheries. Our method simultaneously optimizes both node and layer coreness vector, allowing for a comprehensive understanding of core-periphery dynamics across multiple layers. We also introduced a procedure for determining the sizes of node and layer cores and applied our framework to three real-world multilayer networks from diverse application domains, uncovering novel structural insights into citation, transportation, and world trade networks.

\onecolumngrid
\appendix

\section{Results on synthetic networks}\label{sec:experiments_synthetic}

We test our nonlinear spectral method (Algorithm 1 in the main text) on a range of synthetic multilayer networks with prescribed core-periphery structure.
We use directed stochastic block model networks with two edge probabilities---$p_c$ for edges within the upper-left L-shape representing connections between core nodes in Borgatti and Everett's ideal core-periphery model \cite{borgatti2000models} and $p_p$ for the remaining edges in the bottom right square---for all adjacency tensor slices $\mathcal{A}^{(k,l)}\in\R^{n\times n}, l,k=1,\dots,L$, or equivalently, all blocks of the supra adjacency matrix of the multilayer network.
Before running the core-periphery detection method, we obscure the planted structure by applying the same random row and column permutation to all slices or blocks, respectively.
The results of the nonlinear spectral method are visualized by spy plots of the supra-adjacency matrix with the two-level permutation based on the node and layer coreness vectors described in the main text, cf., e.g., Figure 1 in the main text or \Cref{fig:OA_spy_plots_unweighted}.

\begin{figure*}[b]
	\subfloat[All informative layers, $q_{s^\ast_{\mathrm{node}}}=0.586$.]{
		\includegraphics[width=0.19\textwidth]{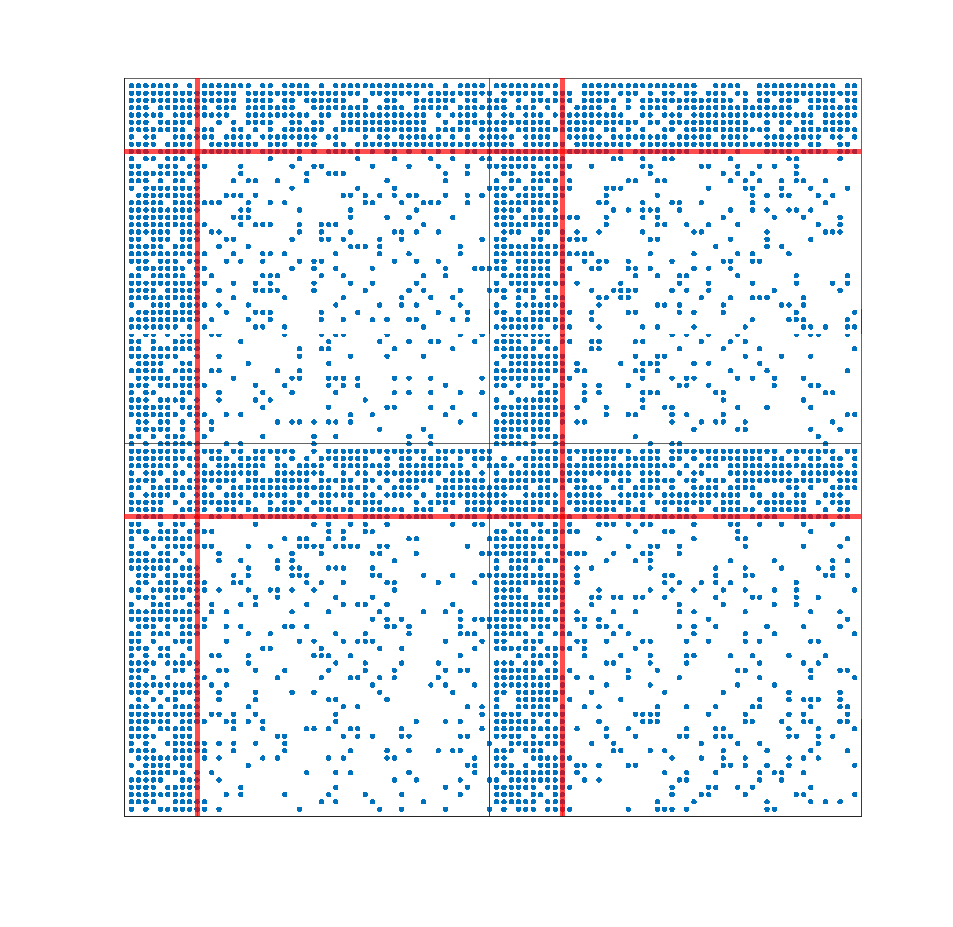}
	}
	\subfloat[Three informative layers, one noise layer, $q_{s^\ast_{\mathrm{node}}}=0.440$.]{
		\includegraphics[width=0.19\textwidth]{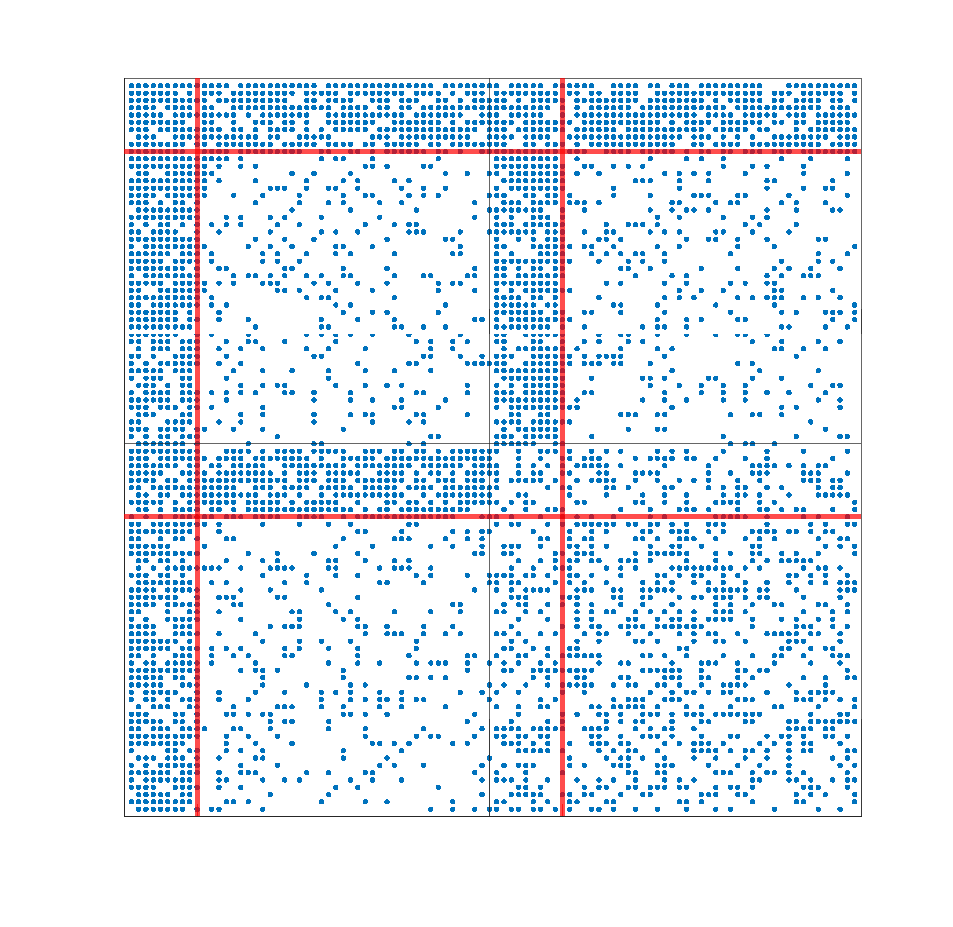}
	}
	\subfloat[Two informative layers, two noise layer, $q_{s^\ast_{\mathrm{node}}}=0.304$.]{
		\includegraphics[width=0.19\textwidth]{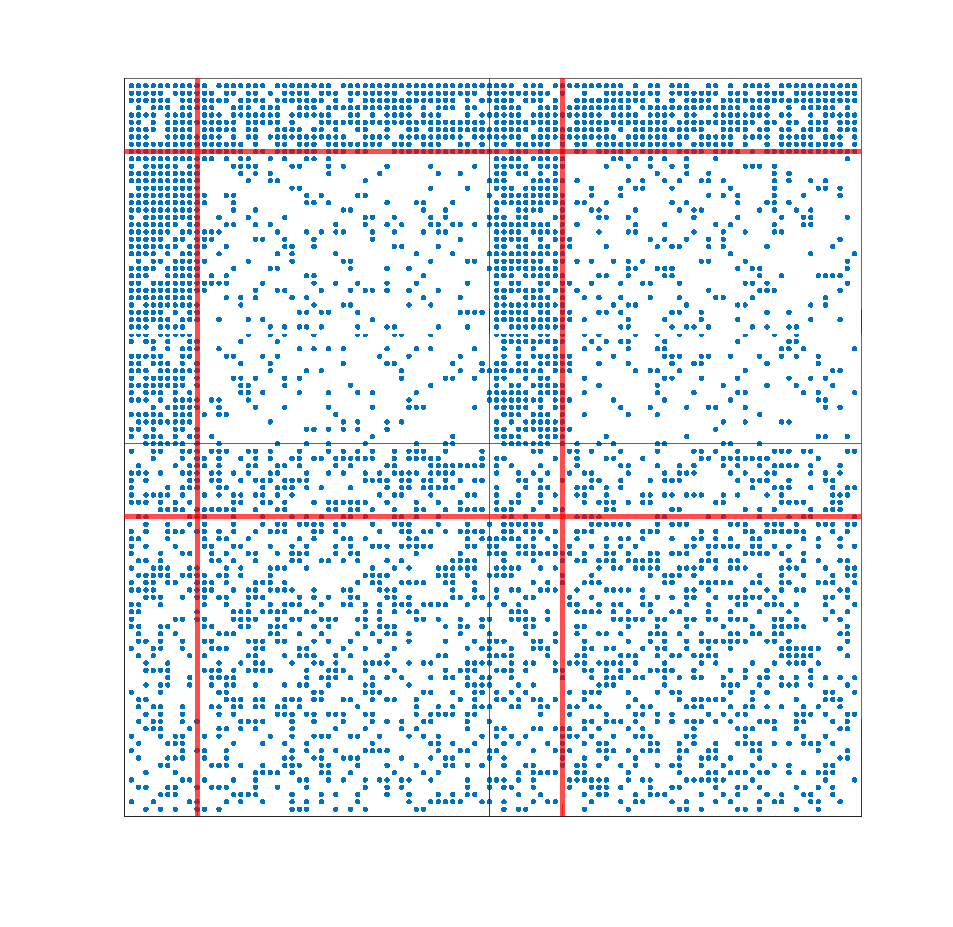}
	}
	\subfloat[One informative layer, three noise layers, $q_{s^\ast_{\mathrm{node}}}=0.159$.]{
		\includegraphics[width=0.19\textwidth]{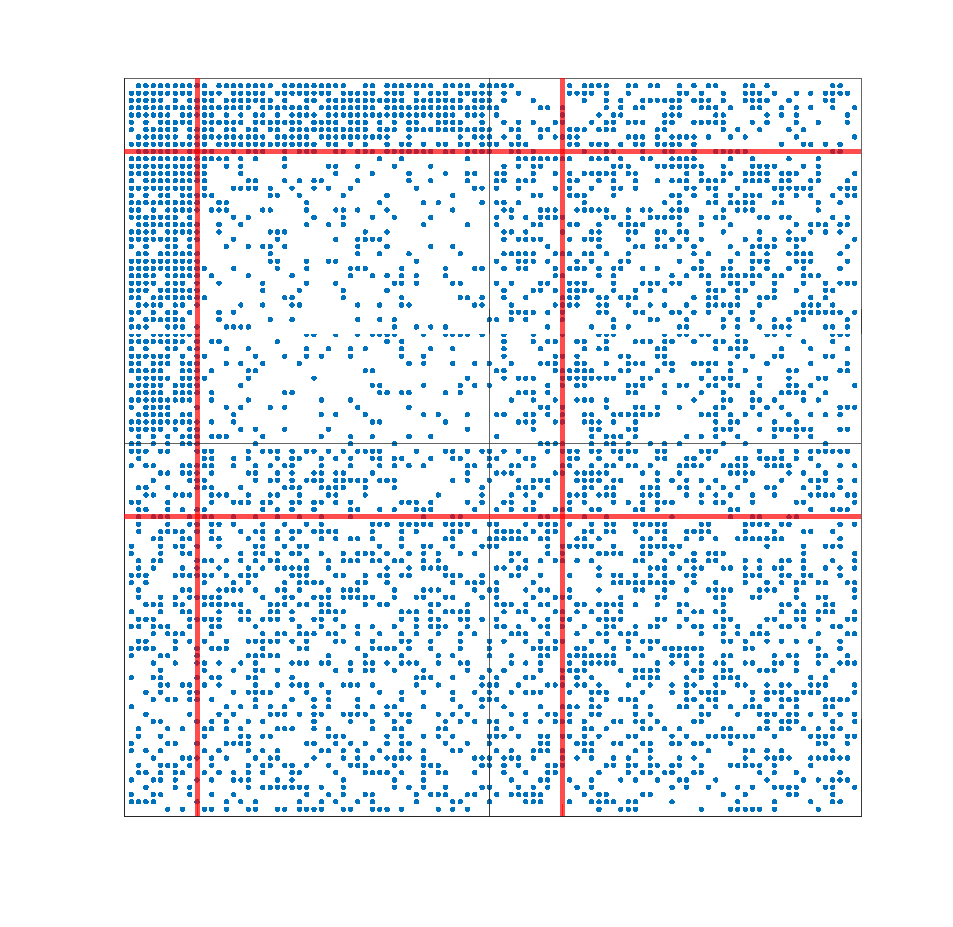}
	}
	\subfloat[Four noise layers, $q_{s^\ast_{\mathrm{node}}}=0.055$.]{
		\includegraphics[width=0.19\textwidth]{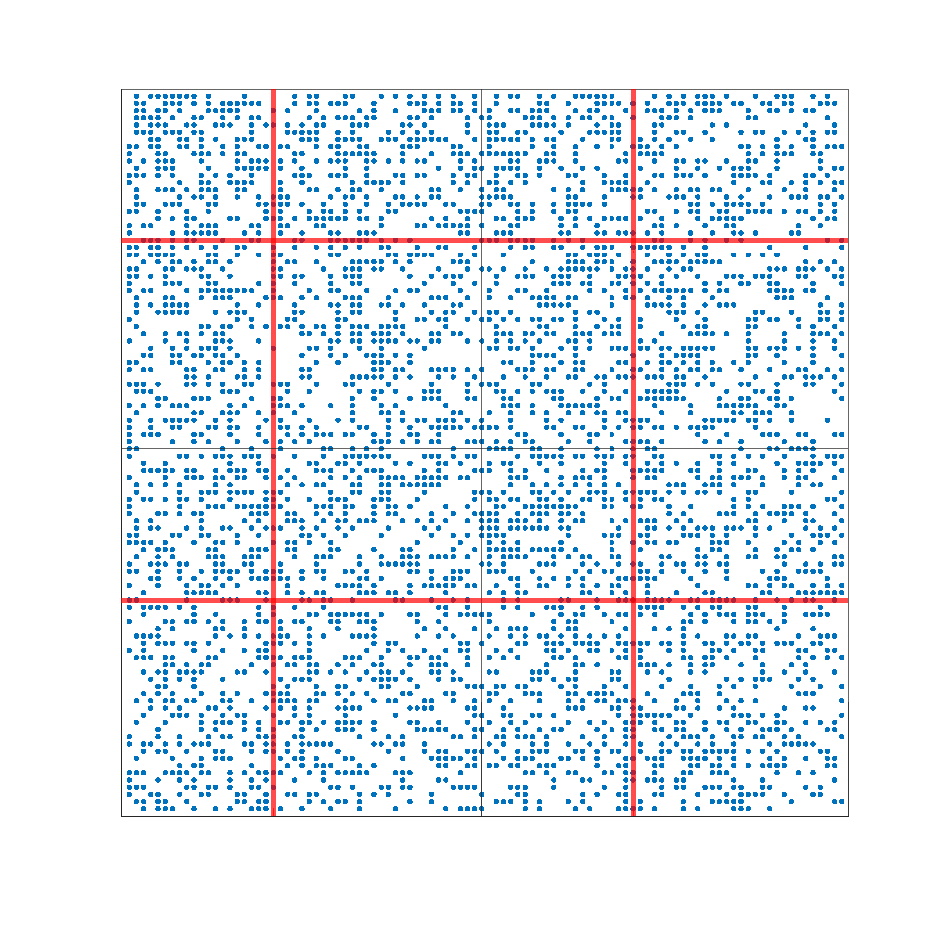}\label{fig:SBM_2_layers_spy_plots_all_noise}
	}
	\caption{Reordered supra-adjacency matrix plots for stochastic block model multilayer networks with $L=2$ layers and $n=50$ nodes.
		We vary the numbers of informative and noisy blocks in the supra-adjacency matrix and visualize the detected node core sizes by red lines.}\label{fig:SBM_2_layers_spy_plots}
\end{figure*}

In the first set of experiments, we choose $L=2$ layers with $n=50$ nodes each and assign a mixture of informative (with respect to the planted core-periphery structure) and noisy blocks in the corresponding supra-adjacency matrices.
For the informative layers, we prescribe a node core of size $10$ with prescribed edge probabilities $p_c=0.8$ and $p_p=0.2$.
For the noise layers, we choose $p_c=p_p=0.4$ to assign approximately equal numbers of edges to both informative and noisy layers.
\Cref{fig:SBM_2_layers_spy_plots} shows that the planted node core is perfectly recovered even if only one of the four blocks in the supra-adjacency matrix is informative.
In \Cref{fig:SBM_2_layers_spy_plots_all_noise}, we choose all blocks to be noisy and similarly to many community detection methods \cite{macmahon2015community} the nonlinear spectral method attempts to find the configuration that most closely resembles a core-periphery structure.
In all examples considered in \Cref{fig:SBM_2_layers_spy_plots}, however, the highest value $q_{s^\ast_{\mathrm{node}}}$ of the QUBO objective function \eqref{eq:QUBO_objective_node} obtained by \Cref{alg} provides an indication of the overall informativity of the multilayer network with respect to core-periphery structure.
Each replacement of an informative block by a noisy one significantly reduces the value $q_{s^\ast_{\mathrm{node}}}$ until it is close to $0$ in the setting of \Cref{fig:SBM_2_layers_spy_plots_all_noise}, which indicates a low confidence in the detected structure.

In the second set of experiments, we choose $L=4$ layers with $n=25$ nodes each and plant $5$ core nodes in each block.
In contrast to the first set of experiments, however, we plant a layer core by assigning the edge probabilities $p_c=0.8$ and $p_p=0.2$ only in a layer L-shape of core size $1$ (\Cref{fig:SBM_4_layers_spy_plots_1_core_layer_layer,fig:SBM_4_layers_spy_plots_1_core_layer_node}) or $2$ (\Cref{fig:SBM_4_layers_spy_plots_2_core_layers_layer,fig:SBM_4_layers_spy_plots_2_core_layers_node}) and reduce the edge probabilities in the remaining blocks representing peripheral blocks to $p_c=0.2$ and $p_p=0.05$.
\Cref{fig:SBM_4_layers_spy_plots} shows that both layer core sizes as well as the prescribed node core size are correctly detected.

\begin{figure*}[t]
	\subfloat[Layer core size $s^\ast_{\mathrm{layer}}=1$ for one core layer.]{
		\includegraphics[width=0.24\textwidth]{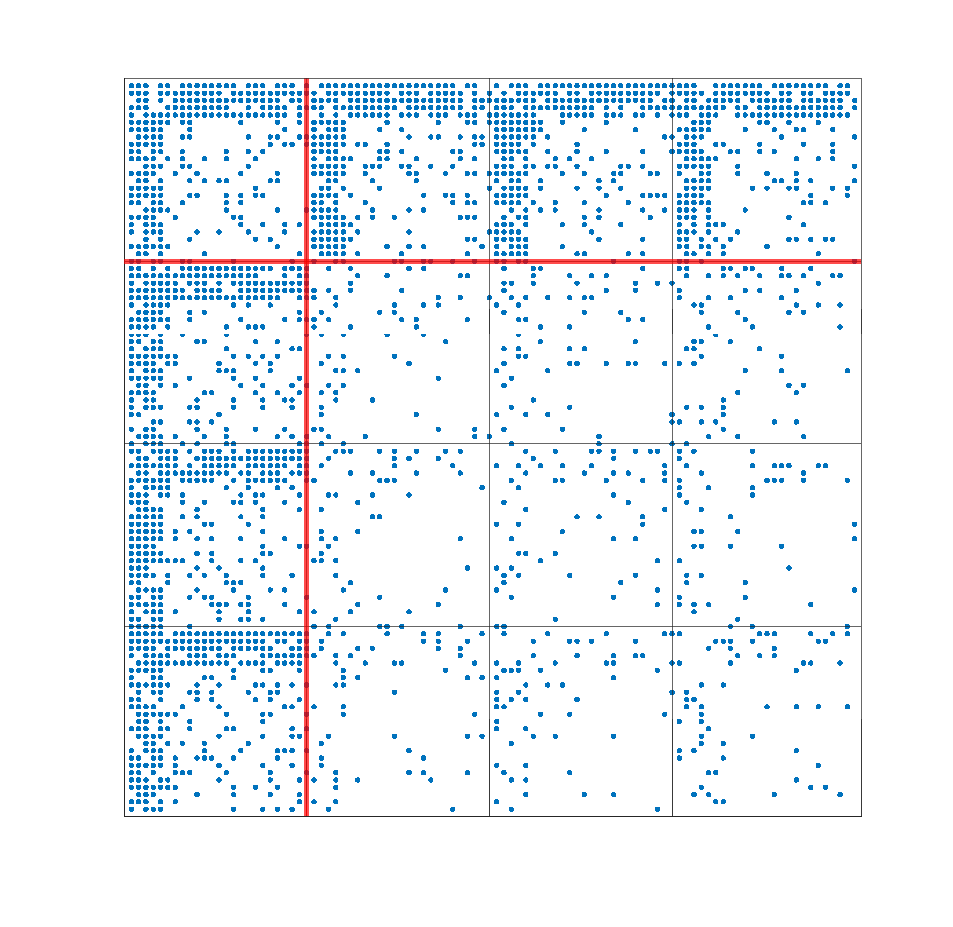}\label{fig:SBM_4_layers_spy_plots_1_core_layer_layer}
	}
	\subfloat[Node core size $s^\ast_{\mathrm{node}}=5$ for one core layer.]{
		\includegraphics[width=0.24\textwidth]{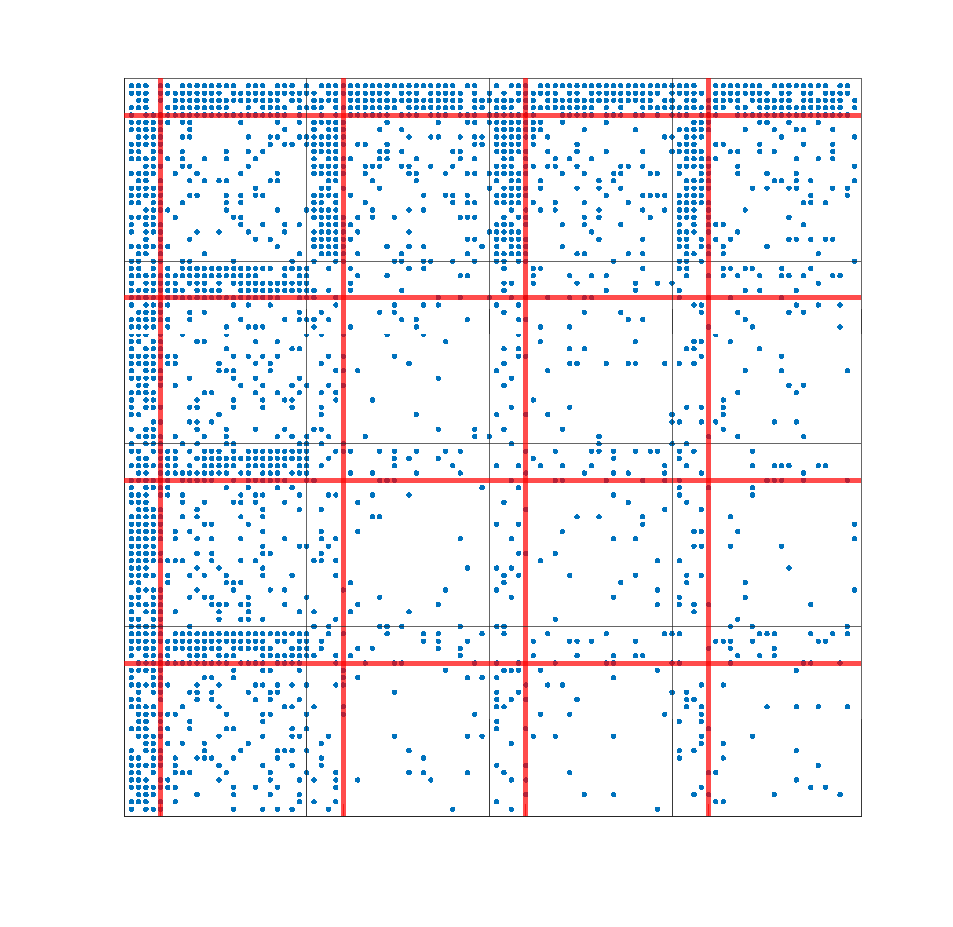}\label{fig:SBM_4_layers_spy_plots_1_core_layer_node}
	}
	\subfloat[Layer core size $s^\ast_{\mathrm{layer}}=2$ for two core layers.]{
		\includegraphics[width=0.24\textwidth]{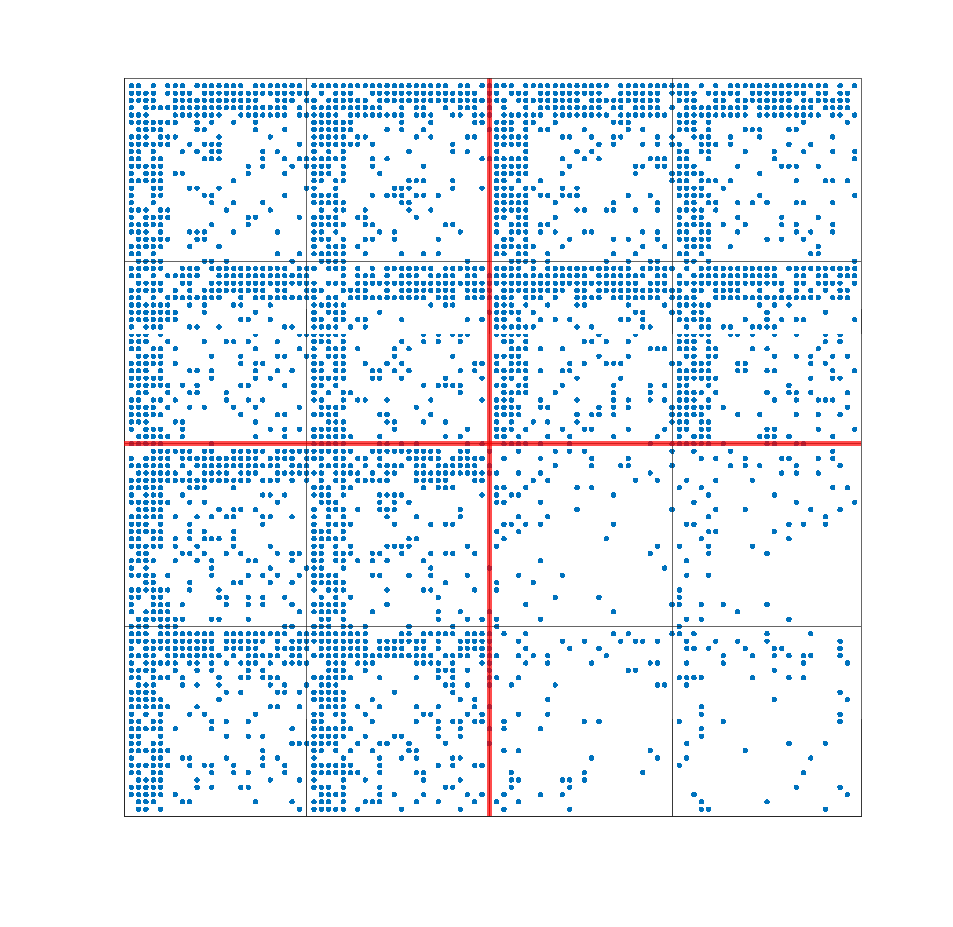}\label{fig:SBM_4_layers_spy_plots_2_core_layers_layer}
	}
	\subfloat[Node core size $s^\ast_{\mathrm{node}}=5$ for two core layers.]{
		\includegraphics[width=0.24\textwidth]{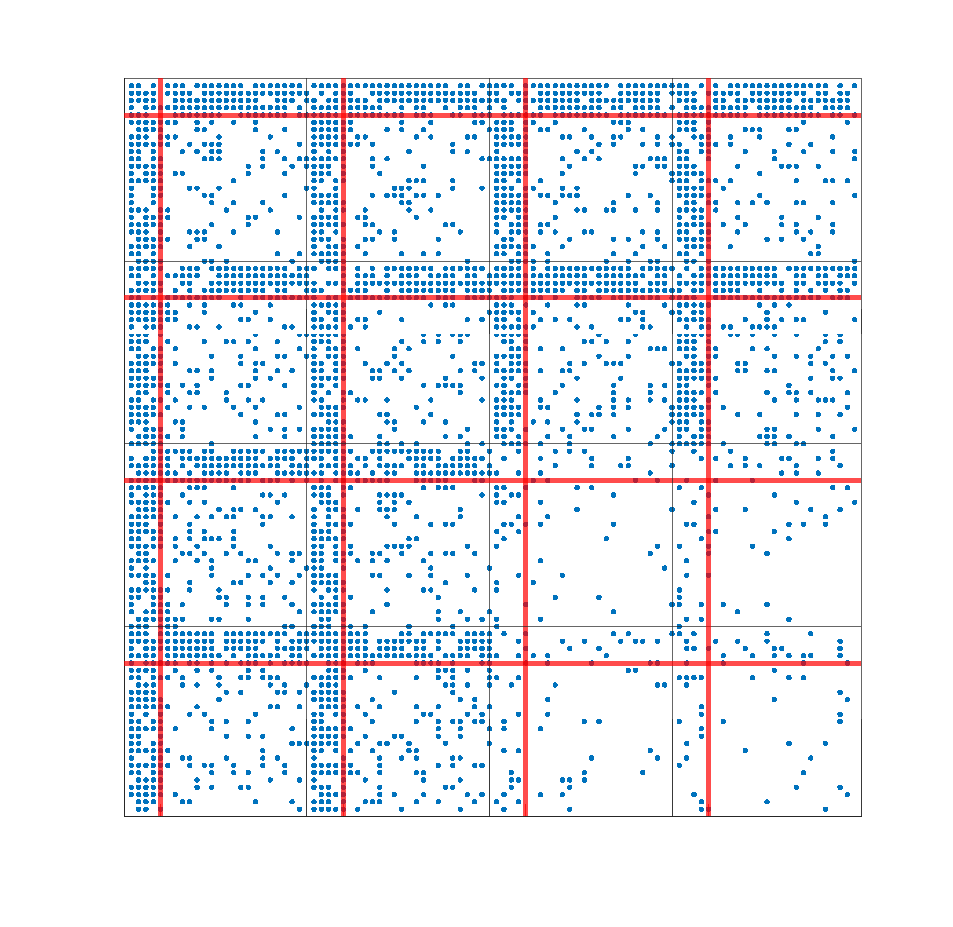}\label{fig:SBM_4_layers_spy_plots_2_core_layers_node}
	}
	\caption{Reordered supra-adjacency matrix plots for stochastic block model multilayer networks with $L=4$ layers and $n=25$ nodes.
		We plant a layer core of size $1$ in panels (a) and (b) and of size $2$ in panels (c) and (d) and visualize the detected layer core sizes in panels (a) and (c) and the detected node core sizes in panels (b) and (d) by red lines.}\label{fig:SBM_4_layers_spy_plots}
\end{figure*}

\section{Additional results on OpenAlex networks}\label{sec:results_OA}

Figure \ref{fig:OA_spy_plots} shows reordered supra-adjacency matrices for a weighted version of the OpenAlex multilayer citation network of complex network scientists of the year $2023$.
\Cref{tab:OA_top_19_rankings_weighted} additionally lists the corresponding top $19$ rankings of authors and disciplines according to node and layer coreness scores, respectively, for the two parameter settings $p=q=22$ and $p=q=2$.
The number $19$ is chosen since it corresponds to the number of layers in the multilayer network.
\Cref{tab:OA_top_19_rankings_weighted} shows slight differences in rankings between the two parameter settings.
Moreover, the coreness vector entries decay much more rapidly for the parameters $p=q=2$ since vector normalization in $p$- and $q$-norm has the effect of increasing the vector entries more strongly towards the value $1$ for larger values of $p$ and $q$.

\begin{table*}
	\centering
	\begin{tabular}{clc@{\hskip 3mm}lc@{\hskip 6mm}lc@{\hskip 3mm}lc}
		\hline\hline
		&\multicolumn{4}{c}{$p=q=22$}&\multicolumn{4}{c}{$p=q=2$}\\
		Rank & \multicolumn{2}{c}{Author (node)} & \multicolumn{2}{c}{Discipline (layer)} & \multicolumn{2}{c}{Author (node)} & \multicolumn{2}{c}{Discipline (layer)}\\\hline\hline
		1 & M. E. J. Newman & 0.887 & Computer science & 0.965 & M. E. J. Newman & 0.675 & Computer science & 0.907\\
		2 & Albert-László Barabási & 0.870 & Mathematics & 0.949 & Albert-László Barabási & 0.412 & Mathematics & 0.407\\
		3 & Réka Albert & 0.860 & Physics & 0.917 & Réka Albert & 0.328 & Physics &  0.106\\
		4 & Steven H. Strogatz & 0.850 & Engineering & 0.858 & Steven H. Strogatz & 0.246 & Biology & 0.014\\
		5 & Duncan J. Watts & 0.842 & Biology & 0.841 & Duncan J. Watts & 0.208 & Engineering & 0.008\\
		6 & Alessandro Vespignani & 0.819 & Geography & 0.802 & Guanrong Chen & 0.117 & Psychology & 0.003\\
		7 & Rom.\ Pastor-Satorras & 0.819 & Psychology & 0.801 & Rom.\ Pastor-Satorras & 0.114 & Geography & 0.002\\
		8 & Guanrong Chen & 0.819 & Sociology & 0.764 & Alessandro Vespignani & 0.114 & Sociology & 0.001\\
		9 & Shlomo Havlin & 0.814 & Economics & 0.762 & Michelle Girvan & 0.106 & Medicine & $4\cdot 10^{-4}$\\
		10 & Michelle Girvan & 0.809 & Business & 0.760 & Shlomo Havlin & 0.101 & Business & $3\cdot 10^{-4}$\\
		11 & Marc Barthélemy & 0.804 & Medicine & 0.746 & Vito Latora & 0.082 & Economics & $2\cdot 10^{-4}$\\
		12 & Vito Latora & 0.803 & Materials science & 0.727 & Marc Barthélemy & 0.080 & Materials science & $7\cdot 10^{-5}$\\
		13 & Hawoong Jeong & 0.797 & Political science & 0.716 & Hawoong Jeong & 0.068 & Political science & $4\cdot 10^{-5}$\\
		14 & Yamir Moreno & 0.795 & Chemistry & 0.688 & Yamir Moreno & 0.064 & Chemistry & $2\cdot 10^{-5}$\\
		15 & Ginestra Bianconi & 0.792 & Geology & 0.683 & Ginestra Bianconi & 0.059 & Geology & $1\cdot 10^{-5}$\\
		16 & H. Eugene Stanley & 0.791 & Philosophy & 0.648 & H. Eugene Stanley & 0.054 & Philosophy & $7\cdot 10^{-6}$\\
		17 & S. N. Dorogovtsev & 0.788 & Environm.\ science & 0.631 & Àlex Arenas & 0.053 & Environm.\ science & $2\cdot 10^{-6}$\\
		18 & J. F. F. Mendes & 0.787 & Art & 0.595 & S. N. Dorogovtsev & 0.052 & History & $3\cdot 10^{-7}$\\
		19 & Àlex Arenas & 0.785 & History & 0.575 & Tao Zhou & 0.051 & Art & $1\cdot 10^{-8}$\\\hline\hline
	\end{tabular}
	\caption{Top $L=19$ authors and disciplines for the weighted OpenAlex citation multilayer network of the year 2023 (corresponding to the results from Figure \ref{fig:OA_author_rankings}) including the respective coreness scores for the two parameter settings $p=q=22$ and $p=q=2$.}\label{tab:OA_top_19_rankings_weighted}
\end{table*}

A natural question to ask is what additional information is gained by modeling the weighted OpenAlex citation network as a multilayer network as described in the main text in comparison to analyzing the aggregated single-layer network with adjacency matrix entries $[\bm{A}_{\mathrm{agg}}]_{ij}=\sum_{k,l=1}^L \mathcal{A}_{ij}^{kl}$ for $i,j=1,\dots,n$.
Clearly, such an approach prohibits the detection and ranking of core layers, i.e., scientific disciplines.
For a comparison of the obtained node coreness vectors, we ran the single-layer nonlinear spectral method \cite{tudisco2019nonlinear} with parameters $\alpha=10$ and $p=22$ on the aggregated network, which is equivalent to running the nonlinear spectral method for multilayer networks on the same network with $L=1$.
\Cref{fig:OA_single_layer_NSM_spy} shows that a valid core-periphery reordering is obtained that bears a partial resemblance although no immediate correspondence to the blocks of the multilayer results displayed in Figure 1b in the main text.
The comparison of the top $19$ authors (nodes) in terms of their coreness scores displayed in \Cref{fig:OA_single_layer_NSM_rankings}, however, highlights strong differences to the author (node) rankings in the $p=q=22$ case in \Cref{tab:OA_top_19_rankings_weighted}.
Since aggregating the multilayer network corresponds to the layer coreness vector $\boldsymbol{c}=\boldsymbol{1}$ in the multilayer setting, it could be argued that the results shown in \Cref{fig:OA_single_layer_NSM} are biased in favor of authors (nodes) primarily publishing in disciplines (layers) considered peripheral by our multilayer approach.

\begin{figure*}
	\subfloat[Reordered aggregated adjacency matrix.]{
		\raisebox{-.555\height}{
			\includegraphics[width=0.4\textwidth]{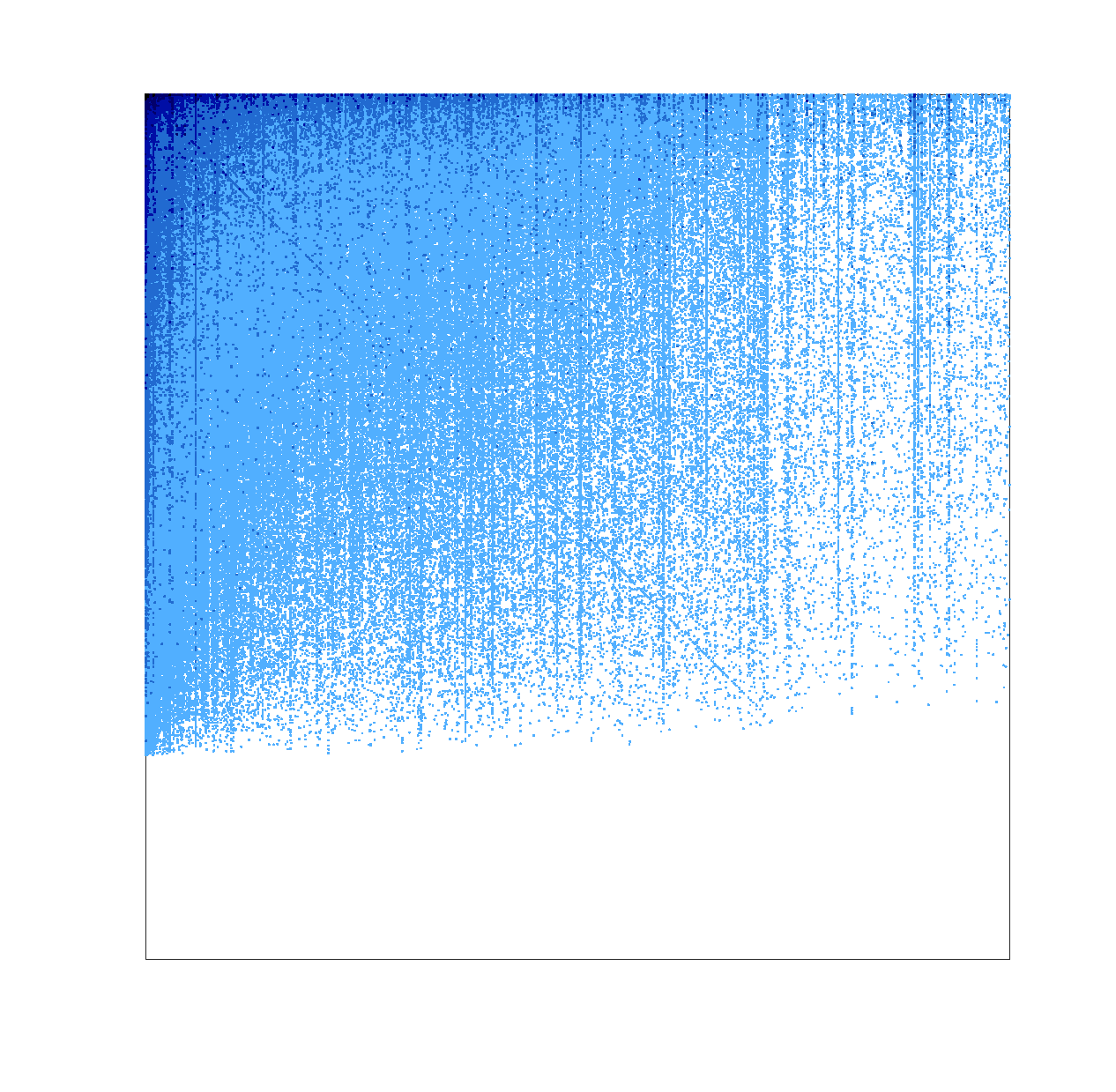}\label{fig:OA_single_layer_NSM_spy}
		}
	}
	\subfloat[Top $19$ authors.]{
		\begin{tabular}{clc}
			\hline\hline
			&\multicolumn{2}{c}{$p=22$}\\
			Rank & \multicolumn{2}{c}{Author (node)}\\\hline\hline
			1 & Shlomo Havlin & 0.806\\
			2 & Guanrong Chen & 0.796\\
			3 & Stefano Boccaletti & 0.795\\
			4 & H. Eugene Stanley & 0.791\\
			5 & Jürgen Kurths & 0.786\\
			6 & Tao Zhou & 0.785\\
			7 & Bing--Hong Wang & 0.782\\
			8 & Matjaž Perc & 0.781\\
			9 & Yamir Moreno & 0.779\\
			10 & Ginestra Bianconi & 0.777\\
			11 & Vito Latora & 0.775\\
			12 & Zhongzhi Zhang & 0.774\\
			13 & Jesús Gómez‐Gardeñes & 0.772\\
			14 & Ying--Cheng Lai & 0.771\\
			15 & Jianxi Gao & 0.768\\
			16 & Àlex Arenas & 0.767\\
			17 & Manlio De Domenico & 0.765\\
			18 & Wen Xu Wang & 0.764\\
			19 & Ming Tang & 0.763\\\hline\hline
		\end{tabular}\label{fig:OA_single_layer_NSM_rankings}
	}
	\caption{Plot of the reordered adjacency matrix and top $19$ core authors of the single-layer nonlinear spectral method \cite{tudisco2019nonlinear} applied to the aggregated weighted OpenAlex citation multilayer network.}\label{fig:OA_single_layer_NSM}
\end{figure*}

Furthermore, Figure \ref{fig:OA_author_rankings} shows top 10 author rankings by node coreness score over time for the weighted version of the OpenAlex multilayer citation network of complex network scientists of the year $2023$.
The weighting of edges from author-discipline pairs of a citing paper to the author-discipline pair of a cited paper is performed such that the total edge weight of one paper amounts to $1$.
This is achieved by additively inserting edge weights of $1$ over the product of the numbers of citing authors, citing disciplines, cited authors, and cited disciplines for each of the $38\,346$ recorded complex network science papers.
The largest observed edge weight in the weighted adjacency tensor is $17.59$.

In this section, we repeat the same experiments for ``unweighted'' edges, i.e., we simply replace the weight of $1$ over the product of the numbers of citing authors, citing disciplines, cited authors, and cited disciplines by $1$.
Note that this procedure still yields a weighted multilayer network with each entry in the adjacency tensor $\mathcal{A}$ representing the total number of citations from a given author in a given discipline to another author in a another discipline (self-loops by self-citations are included).
With this modeling approach, the total edge weight introduced by a given paper is larger when more authors and disciplines are involved in both the citing and cited paper.
The largest edge weight observed in the unweighted adjacency tensor is $876$.

\begin{figure*}
	\subfloat[Original supra-adjacency]{
		\includegraphics[width=0.19\textwidth]{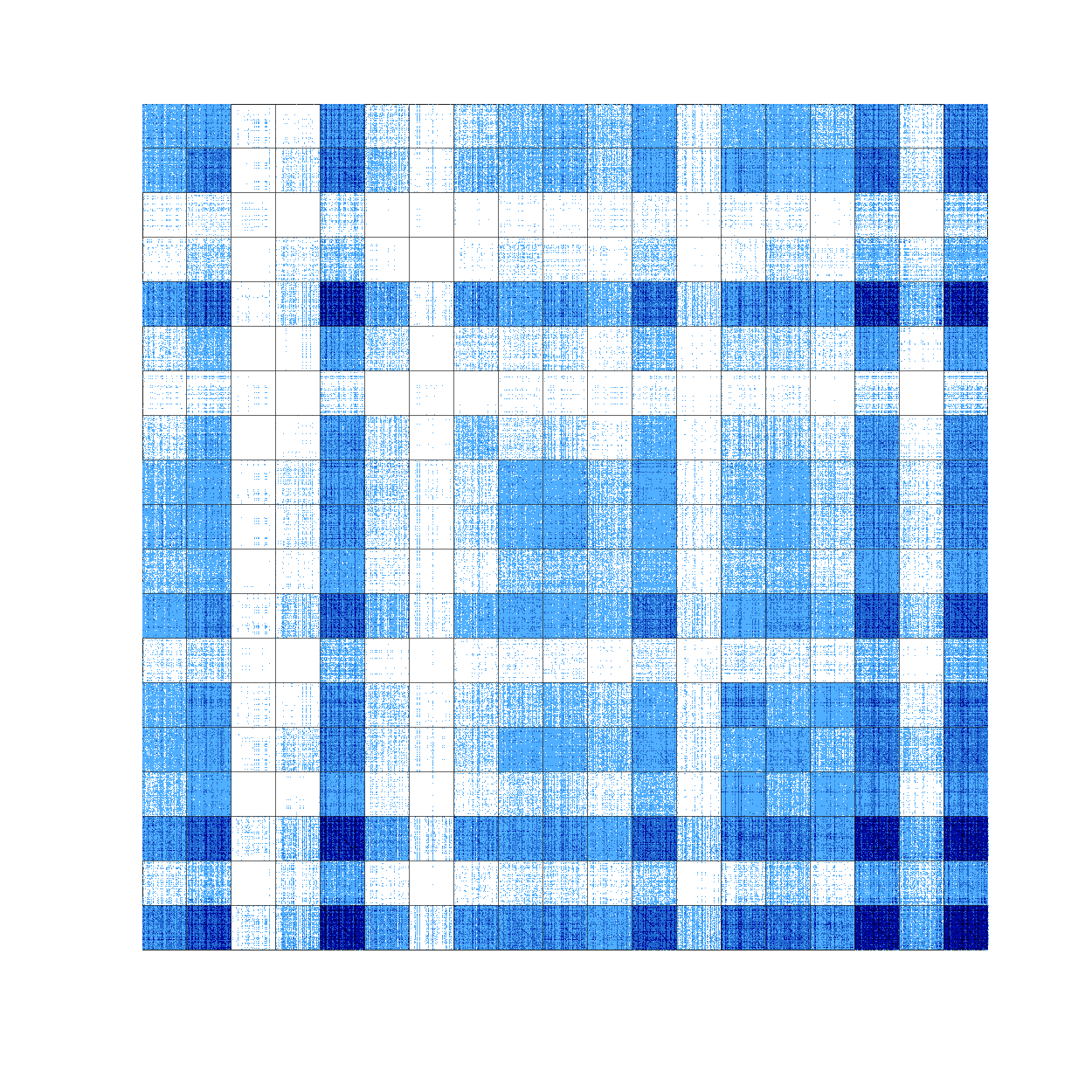}
	}
	\subfloat[Full reordered supra-adjacency for $p=q=22$]{
		\includegraphics[width=0.19\textwidth]{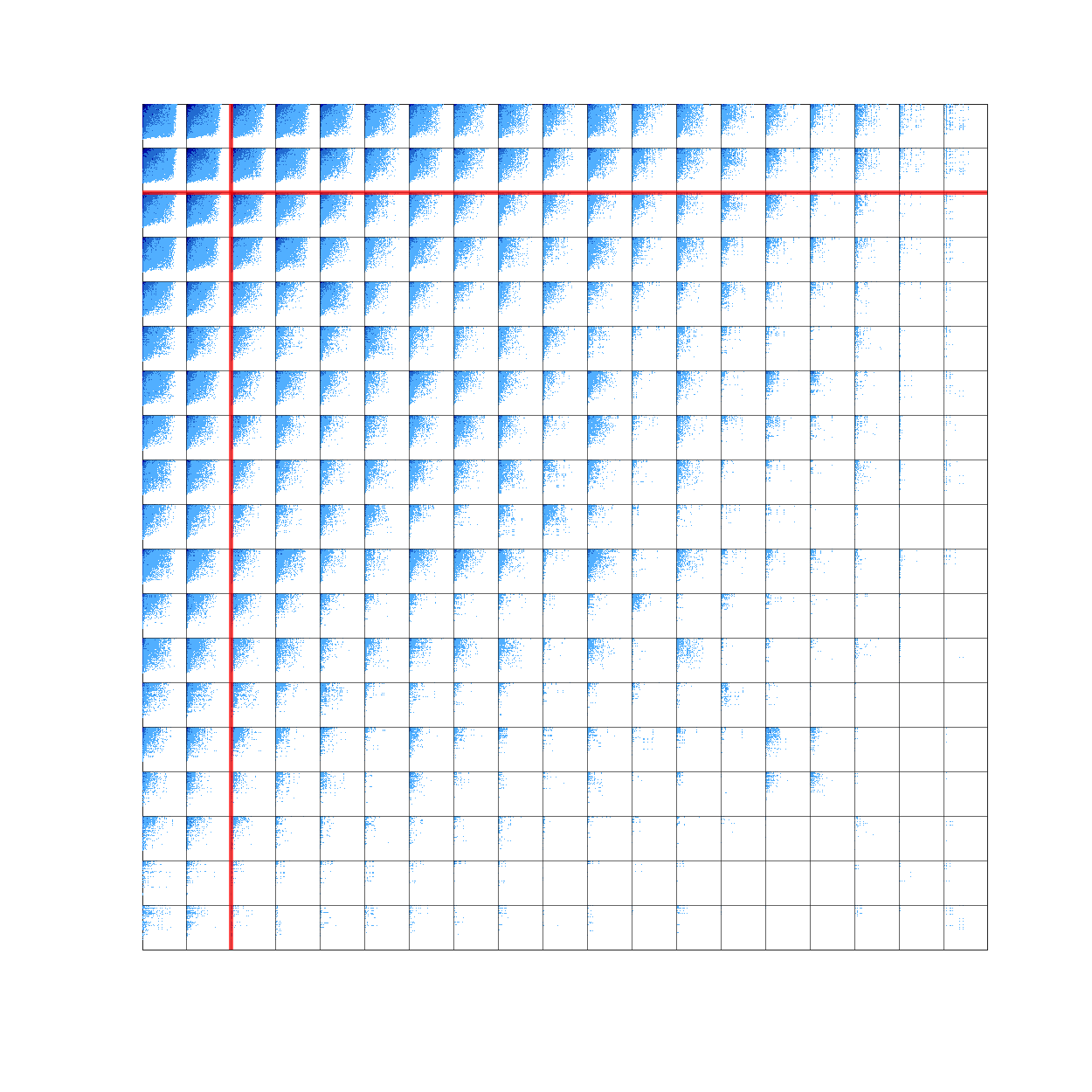}
	}
	\subfloat[Upper left $5\times 5$ blocks of reordered supra-adjacency for $p=q=22$]{
		\includegraphics[width=0.19\textwidth]{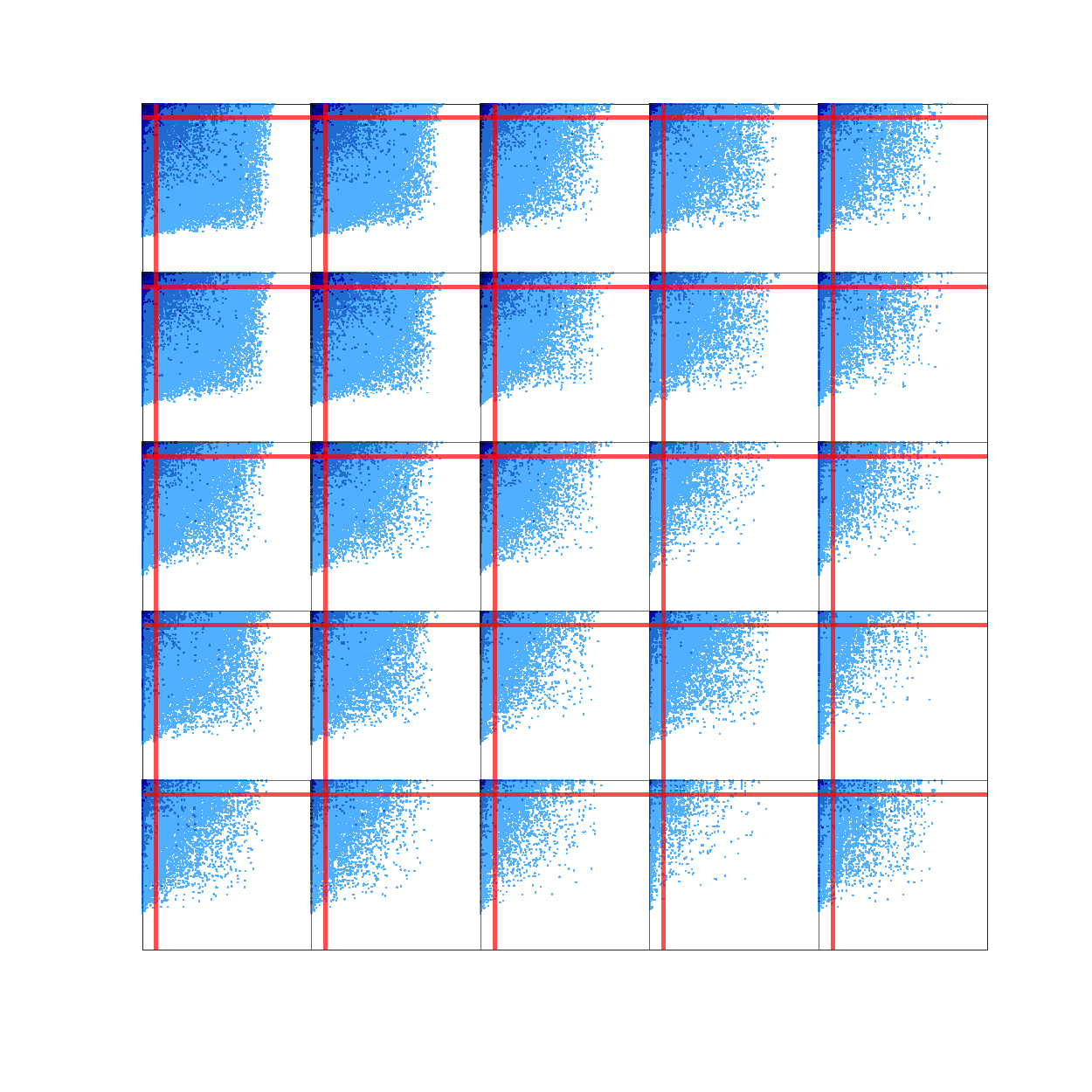}
	}
	\subfloat[Full reordered supra-adjacency for $p=q=2$]{
		\includegraphics[width=0.19\textwidth]{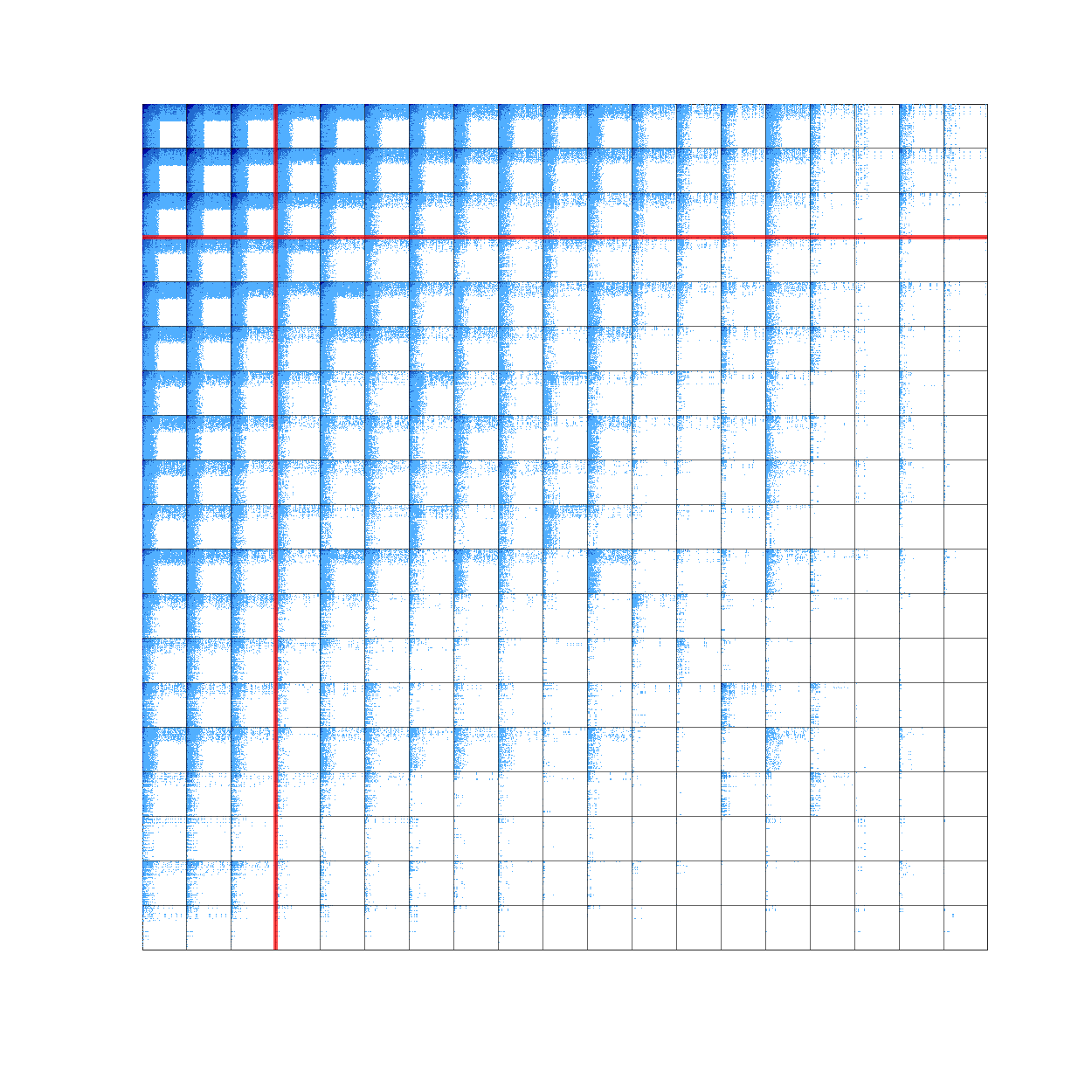}
	}
	\subfloat[Upper left $5\times 5$ blocks of reordered supra-adjacency for $p=q=2$]{
		\includegraphics[width=0.19\textwidth]{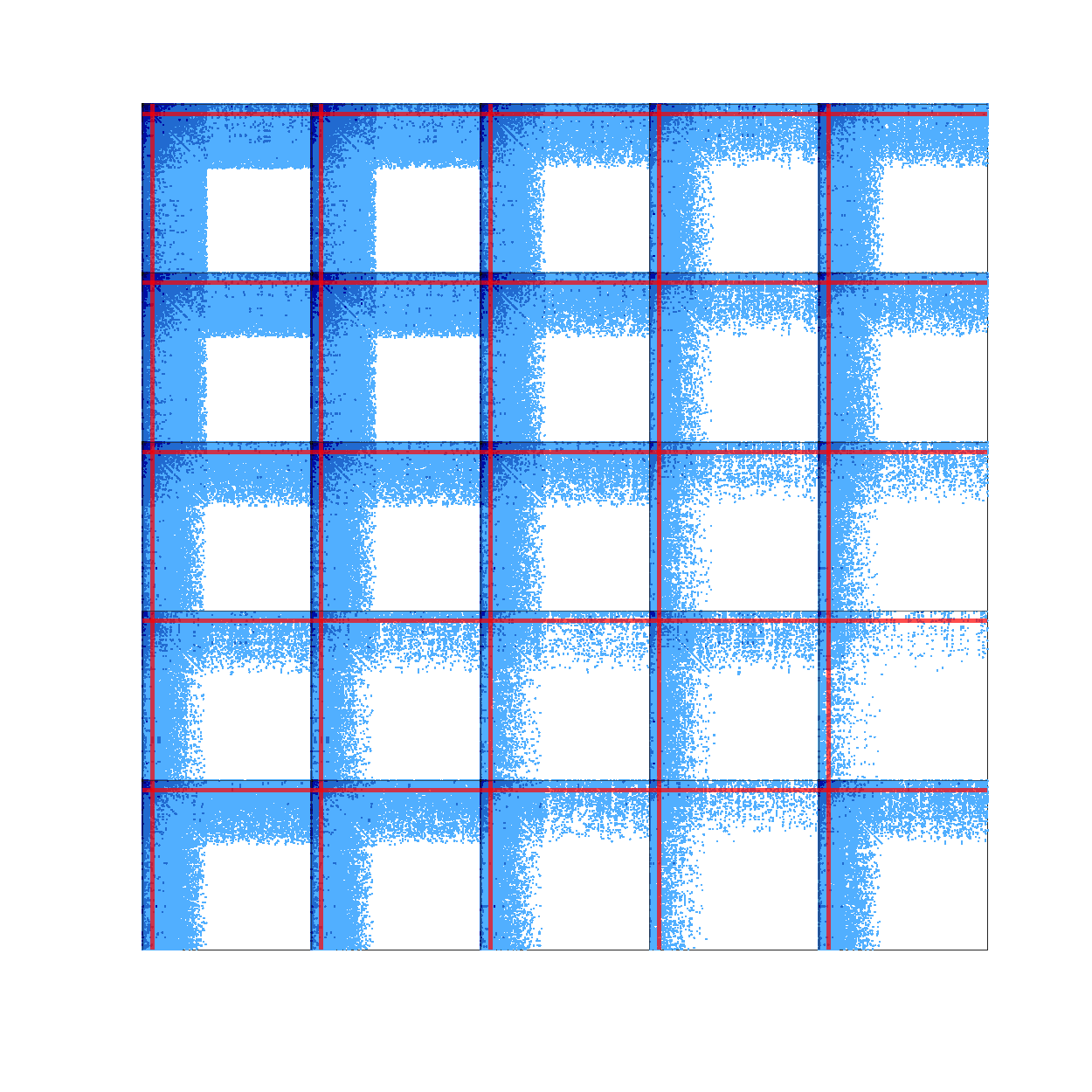}
	}
	\caption{Supra-adjacency matrix plots for the OpenAlex multilayer citation network of complex network scientists of the year 2023 in which individual citations are not weighted by the numbers of authors and disciplines.
		Dark fonts indicate large edge weights.
		Red lines in panels b) and d) indicate the layer core sizes $s^\ast_{\mathrm{layer}}=2$ and $s^\ast_{\mathrm{layer}}=3$, respectively, and red lines in panels c) and e) indicate the node core size $s^\ast_{\mathrm{node}}=4\,398$ and $s^\ast_{\mathrm{node}}=3\,051$, respectively, in each block.}\label{fig:OA_spy_plots_unweighted}
\end{figure*}

\begin{table*}
	\centering
	\begin{tabular}{clc@{\hskip 3mm}lc@{\hskip 6mm}lc@{\hskip 3mm}lc}
		\hline\hline
		&\multicolumn{4}{c}{$p=q=22$}&\multicolumn{4}{c}{$p=q=2$}\\
		Rank & \multicolumn{2}{c}{Author (node)} & \multicolumn{2}{c}{Discipline (layer)} & \multicolumn{2}{c}{Author (node)} & \multicolumn{2}{c}{Discipline (layer)}\\\hline\hline
		1 & Albert-László Barabási & 0.856 & Computer science & 0.959 & Albert-László Barabási & 0.452 & Computer science & 0.866\\
		2 & Réka Albert & 0.842 & Mathematics & 0.946 & Réka Albert & 0.330 & Mathematics & 0.456\\
		3 & M. E. J. Newman & 0.838 & Physics & 0.923 & M. E. J. Newman & 0.310 & Physics & 0.200\\
		4 & Shlomo Havlin & 0.835 & Engineering & 0.871 & Shlomo Havlin & 0.265 & Biology & 0.034\\
		5 & Steven H. Strogatz & 0.830 & Biology & 0.860 & Steven H. Strogatz & 0.239 & Engineering & 0.020\\
		6 & Duncan J. Watts & 0.823 & Psychology & 0.825 & Duncan J. Watts & 0.204 & Geography & 0.007\\
		7 & H. Eugene Stanley & 0.821 & Geography & 0.822 & H. Eugene Stanley & 0.179 & Psychology & 0.005\\
		8 & Alessandro Vespignani & 0.817 & Economics & 0.790 & Alessandro Vespignani & 0.169 & Economics & 0.002\\
		9 & Yamir Moreno & 0.817 & Sociology & 0.786 & Yamir Moreno & 0.168 & Sociology & 0.002\\
		10 & Guanrong Chen & 0.815 & Medicine & 0.782 & Guanrong Chen & 0.167 & Medicine & 0.001\\
		11 & Stefano Boccaletti & 0.814 & Business & 0.781 & Vito Latora & 0.159 & Business & $6\cdot 10^{-4}$\\
		12 & Vito Latora & 0.814 & Materials science & 0.751 & Stefano Boccaletti & 0.157 & Materials science & $4\cdot 10^{-4}$\\
		13 & Rom.\ Pastor-Satorras & 0.813 & Political science & 0.730 & Rom.\ Pastor-Satorras & 0.152 & Chemistry & $2\cdot 10^{-4}$\\
		14 & Hawoong Jeong & 0.799 & Chemistry & 0.719 & Hawoong Jeong & 0.109 & Geology & $1\cdot 10^{-4}$\\
		15 & Marc Barthélemy & 0.799 & Geology & 0.709 & Marc Barthélemy & 0.105 & Political science & $4\cdot 10^{-5}$\\
		16 & Àlex Arenas & 0.797 & Environm.\ science & 0.656 & Àlex Arenas & 0.105 & Environm.\ science & $1\cdot 10^{-5}$\\
		17 & Jürgen Kurths & 0.791 & Philosophy & 0.643 & Jürgen Kurths & 0.089 & History & $4\cdot 10^{-6}$\\
		18 & Sergey V. Buldyrev & 0.790 & Art & 0.588 & Tao Zhou & 0.088 & Philosophy & $3\cdot 10^{-6}$\\
		19 & Tao Zhou & 0.790 & History & 0.586 & Mario Chávez & 0.083 & Art & $5\cdot 10^{-8}$\\\hline\hline
	\end{tabular}
	\caption{Top $L=19$ authors and disciplines for the unweighted OpenAlex citation multilayer network of the year 2023 (corresponding to the results from \Cref{fig:OA_auhor_rankings_unweighted}) including the respective coreness scores for the two parameter settings $p=q=22$ and $p=q=2$.}\label{tab:OA_top_19_rankings_unweighted}
\end{table*}

\Cref{fig:OA_spy_plots_unweighted} repeats the experiments from Figure \ref{fig:OA_spy_plots} for the unweighted OpenAlex citation network.
It shows the original supra-adjacency matrix as well as reordered supra-adjacency matrices and their principal $5\times 5$ blocks for the parameter choices $p=q=22$ and $p=q=2$.
Again, for the choice $\alpha=\beta=10$ the parameters $p=q=22$ guarantee convergence to the global optimum of the objective function $f_{\alpha,\beta}(\bm{x},\bm{c})$ while $p=q=2$ lead to locally optimal solutions for initial conditions $\bm{x}_0=\bm{1}\in\R^n$ and $\bm{c}_0=\bm{1}\in\R^L$.
We observe differences in the row and column permutations as well as the core-periphery assignment of nodes and layers in comparison to the weighted case.
However, since the color code has been adapted according to the different scale of edge weights, the plots in \Cref{fig:OA_spy_plots_unweighted} are visually similar to the weighted results presented in Figure \ref{fig:OA_spy_plots}.
In the parameter setting $p=q=22$, the $s^\ast_{\mathrm{layer}}=2$ disciplines ``Computer Science'' and ``Mathematics'' are detected as the layer core while $s^\ast_{\mathrm{node}}=4\,398$ authors are assigned to the node core.
For $p=q=2$, the $s^\ast_{\mathrm{layer}}=3$ disciplines ``Computer Science'', ``Mathematics'', and ``Physics'' are detected as the layer core while $s^\ast_{\mathrm{node}}=3\,051$ authors are assigned to the node core.

Additionally, \Cref{tab:OA_top_19_rankings_unweighted} again lists the top $19$ authors and disciplines of the unweighted OpenAlex multilayer citation network of complex network scientists of the year $2023$ for the two parameters $p=q=22$ and $p=q=2$.
Similarly to the weighted case, we observe slight deviations in rankings and expectedly strong differences in the magnitude of coreness vector entries between the two parameter settings.

Finally, \Cref{fig:OA_auhor_rankings_unweighted} shows similar deviations from the weighted case in the ranking of top 10 node/author coreness scores over the years $2000$ to $2023$.
The absence of weighted edges per paper favors authors citing and cited by papers with relatively high numbers of authors and disciplines.

\pgfkeys{/pgf/number format/.cd,1000 sep={}}
\begin{figure*}
	\includegraphics[width=.99\textwidth]{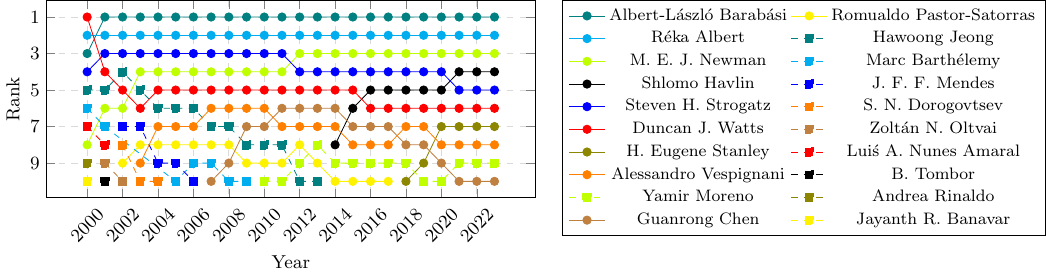}
	\caption{Top $10$ authors by node coreness score in the unweighted OpenAlex citation multilayer network over the years $2000$ to $2023$ for the parameters $p=q=22$.}\label{fig:OA_auhor_rankings_unweighted}
\end{figure*}

\section{Additional results on EUAir network}\label{sec:results_EUAir}

\begin{figure}[b]
	\begin{center}
		\includegraphics[width=0.49\textwidth,clip,trim=330pt 80pt 290pt 90pt]{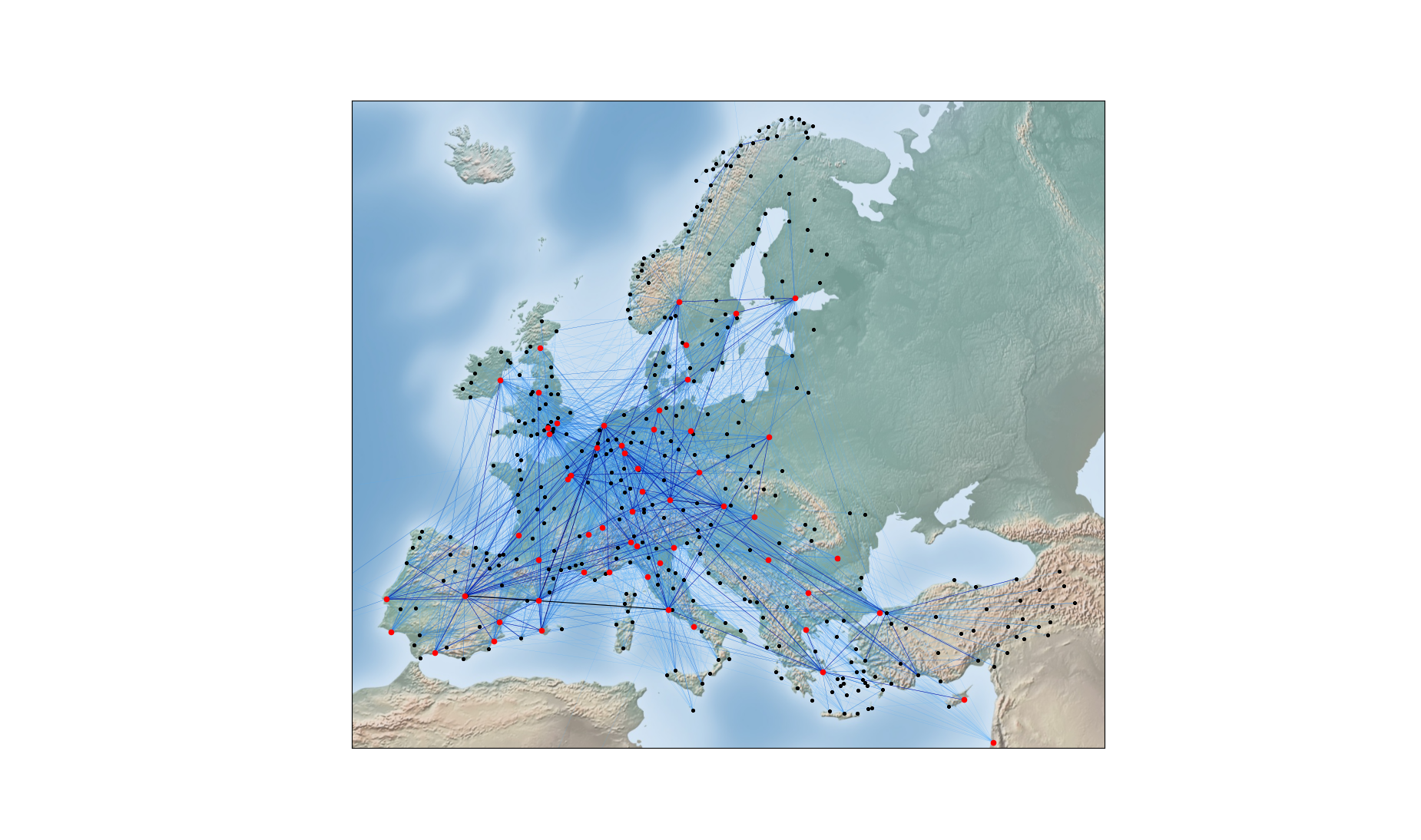}
	\end{center}
	\caption{Plot of the aggregated European Airlines network.
		Core airports are marked red while periphery airports are marked black.
		Dark fonts indicate large edge weights.
		The figure was created with matplotlib's basemap library.}\label{fig:european_airlines2}
\end{figure}

Figure \ref{fig:european_airlines2} shows a map of the European Airlines (EUAir) network in which core nodes detected by the nonlinear spectral method for multilayer networks with parameters $p=q=22$ are marked red and periphery nodes are marked black.
Additionally, edges of the aggregated network with adjacency matrix entries $[\bm{A}_{\mathrm{agg}}]_{ij}=\sum_{l=1}^L \mathcal{A}_{ij}^{ll}$ for $i,j=1,\dots,n$ larger than one are shown with a color coding assigning darker fonts to larger edge weights.
This means that for better visibility, only pairs of airports are connected by an edge in Figure \ref{fig:european_airlines} when at least two airlines offer a flight connection between them.
For completeness, \Cref{fig:european_airlines2} shows the same map plot also including edges with weight $1$.
The largest edge weight in the aggregated network is $5$ indicating that the airports Leonardo da Vinci–Fiumicino and Adolfo Suárez Madrid–Barajas are connected by $5$ different airlines.

While \Cref{sec:numerics_EUAir} reports the $s^\ast_{\mathrm{layer}}=4$ core airlines ``Lufthansa'', ``easyJet'', ``Ryanair'', and ``Air Berlin'', we list the $s^\ast_{\mathrm{node}}=57$ core airports in \Cref{tab:euair_airport_core}.

Again, we study what additional information is gained by modeling the EUAir network as a general multiplex network as described in \Cref{sec:numerics_EUAir} in comparison to analyzing the aggregated single-layer network.
Running the singe-layer nonlinear spectral method \cite{tudisco2019nonlinear} with parameters $\alpha=10$ and $p=22$ on the aggregated network, which is equivalent to running the nonlinear spectral method for multilayer networks on the same network with $L=1$ yields a very similar node core size of $s^\ast_{\mathrm{node}}=59$.
The ranking of some airports in terms of node coreness scores is also comparable.
However, the available layer, i.e., airline information is lost in aggregation.
Consequently, the single-layer approach yields no layer coreness scores and does not allow the determination of the airline core.
Furthermore, our modeling approach of inter-layer edges reflects information on possible changes of airlines, the presence of which should arguably make an airport more likely to belong to the core.
A concrete example are the node coreness rankings of the airports London Gatwick and London Stansted.
While London Gatwick occupies rank $39$ in the multilayer case, it ranks $6$th in the single-layer nonlinear spectral method applied to the aggregated network.
Similarly, London Stansted occupies rank $45$ in the multilayer case while it ranks $5$th in the single-layer case.
The reason is that although London Gatwick is very well-connected within the easyJet layer and London Stansted is very well-connected within the Ryanair layer, both airports are served by relatively few airlines.
In particular, London Gatwick permits connections to $10$ different airlines while London Stansted permits only $6$.
In comparison, the top $10$ airports in the multilayer case offer connections to between $23$ to $30$ different airlines.

Note that information on airline changes is also impossible to encode in a third-order tensor model for which a nonlinear spectral method for multiplex networks is available in the literature \cite{bergermann2024nonlinear}.

\begin{table*}
	\begin{tabular}{cl@{\hskip 20pt}cl@{\hskip 20pt}cl}
		\hline\hline
		Rank & Airport & Rank & Airport & Rank & Airport\\\hline\hline
		1&Barcelona Int.\ &20&Václav Havel  Prague&39&London Gatwick \\
		2&Leonardo da Vinci–Fiumicino &21&Stockholm-Arlanda &40&Marseille Provence \\
		3&Amsterdam  Schiphol&22&Berlin-Tegel &41&Lyon Saint-Exupéry \\
		4&Malpensa Int.\ &23&Geneva Cointrin Int.\ &42&Cologne Bonn \\
		5&Adolfo Suárez Madrid–Barajas &24&Ben Gurion Int.\ &43&Gothenburg-Landvetter \\
		6&Munich &25&London Heathrow &44&Valencia \\
		7&Brussels &26&Málaga &45&London Stansted \\
		8&Vienna Int.\ &27&Helsinki Vantaa &46&Belgrade Nikola Tesla \\
		9&Frankfurt am Main &28&Oslo Gardermoen &47&Thessaloniki Macedonia Int.\ \\
		10&Charles de Gaulle Int.\ &29&Stuttgart &48&Naples Int.\ \\
		11&Eleftherios Venizelos Int.\ &30&Henri Coandă Int.\ &49&Alicante Int.\ \\
		12&Copenhagen Kastrup &31&Humberto Delgado (Lisbon) &50&Toulouse-Blagnac \\
		13&Düsseldorf &32&Dublin &51&Edinburgh \\
		14&Zürich &33&Atatürk Int.\ &52&Paris-Orly \\
		15&Warsaw Chopin &34&Bologna Guglielmo Marconi &53&Hannover \\
		16&Venice Marco Polo &35&Palma De Mallorca &54&Faro \\
		17&Budapest Liszt Ferenc Int.\ &36&Manchester &55&Pisa Int.\ \\
		18&Hamburg &37&Sofia &56&Bordeaux-Mérignac \\
		19&Nice-Côte d'Azur &38&Milano Linate &57&Larnaca Int.\ \\\hline\hline
	\end{tabular}
	\caption{Ranking of core nodes (airports) in the EUAir multiplex network for $p=q=22$.}\label{tab:euair_airport_core}
\end{table*}

\section{Additional results on WIOD networks}\label{sec:results_WIOD}

\begin{figure*}
	\begin{center}
		\subfloat[Original supra-adjacency]{
			\includegraphics[width=0.32\textwidth]{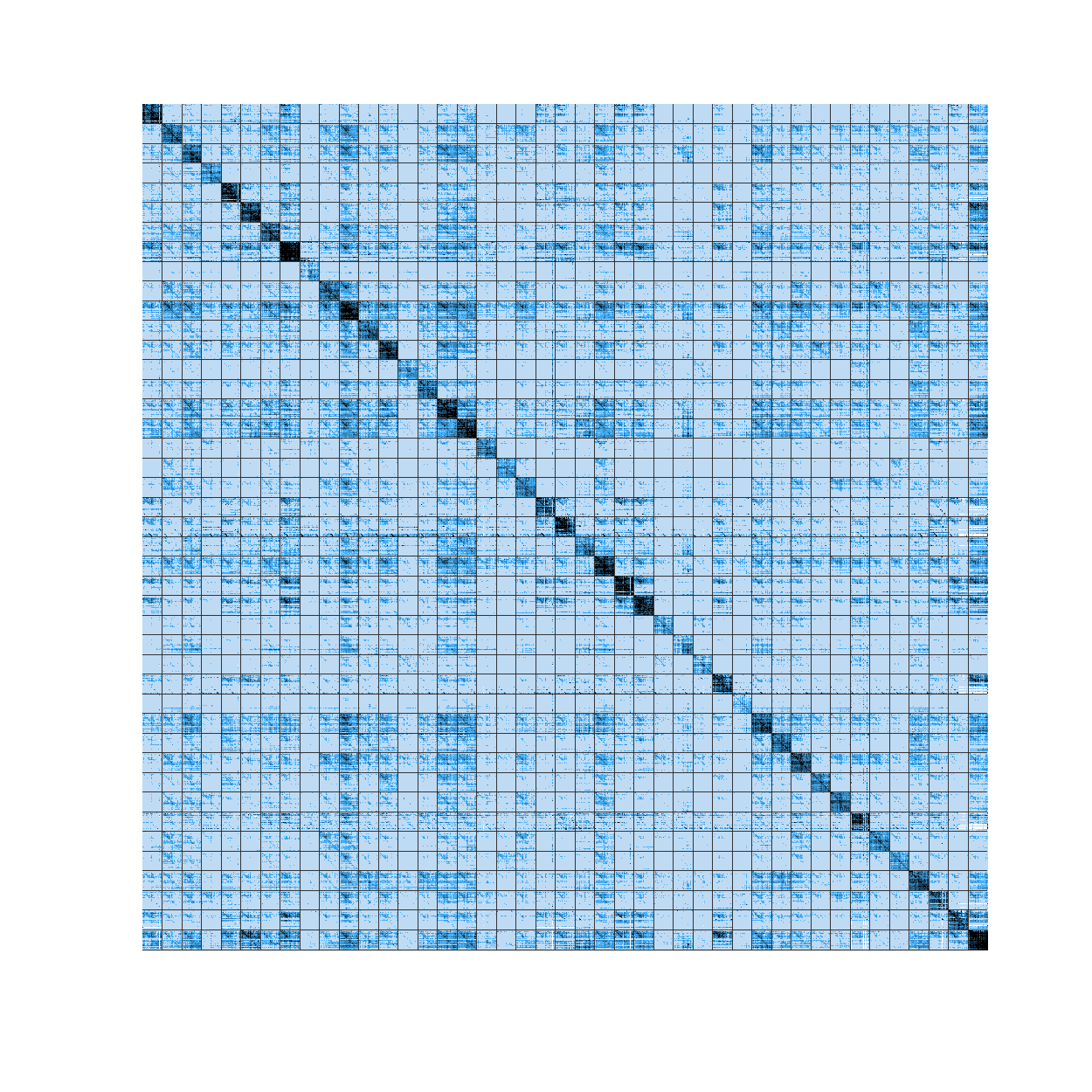}
		}
		\subfloat[Full reordered supra-adjacency]{
			\includegraphics[width=0.32\textwidth]{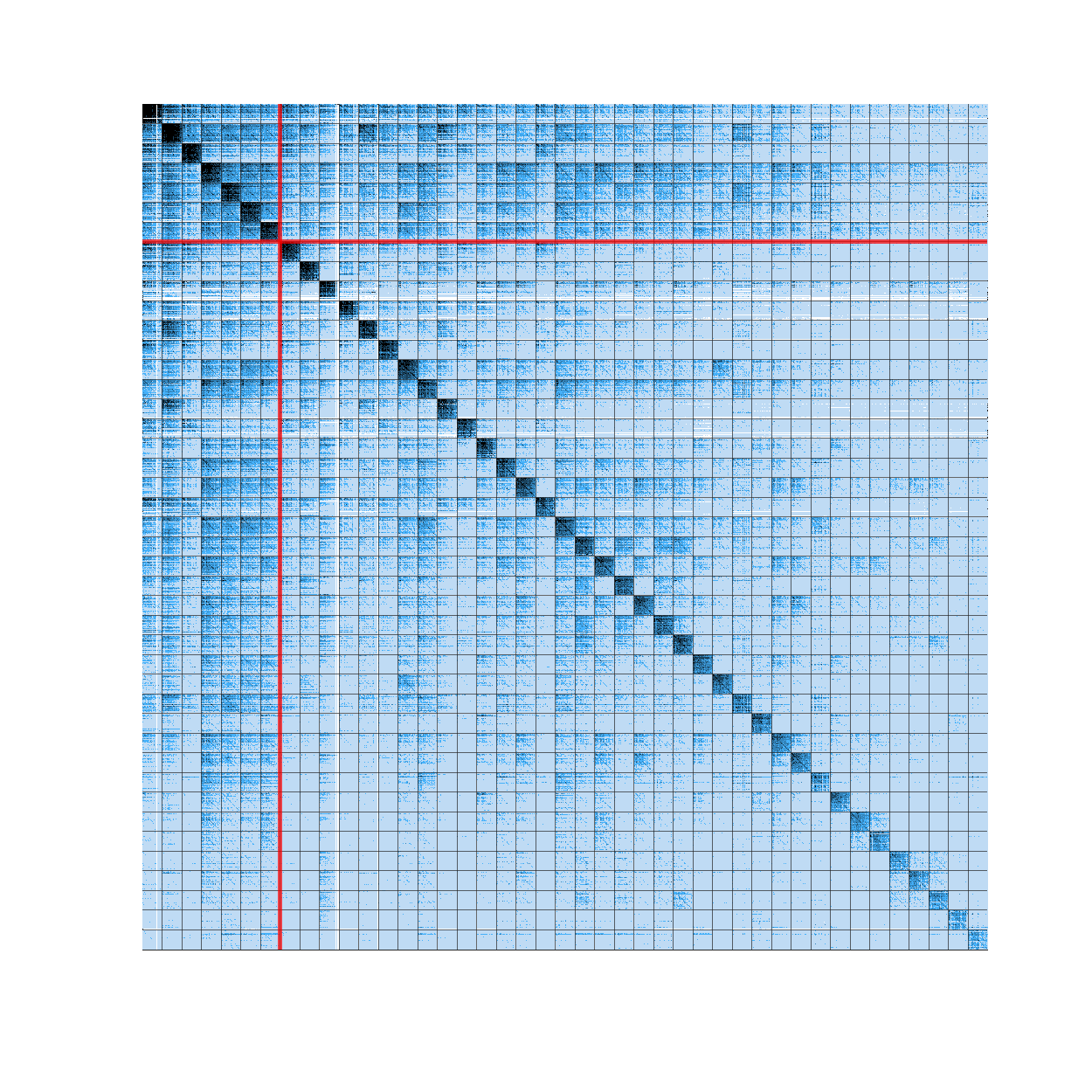}
		}
		\subfloat[Upper left $10\times 10$ blocks of reordered supra-adjacency]{
			\includegraphics[width=0.32\textwidth]{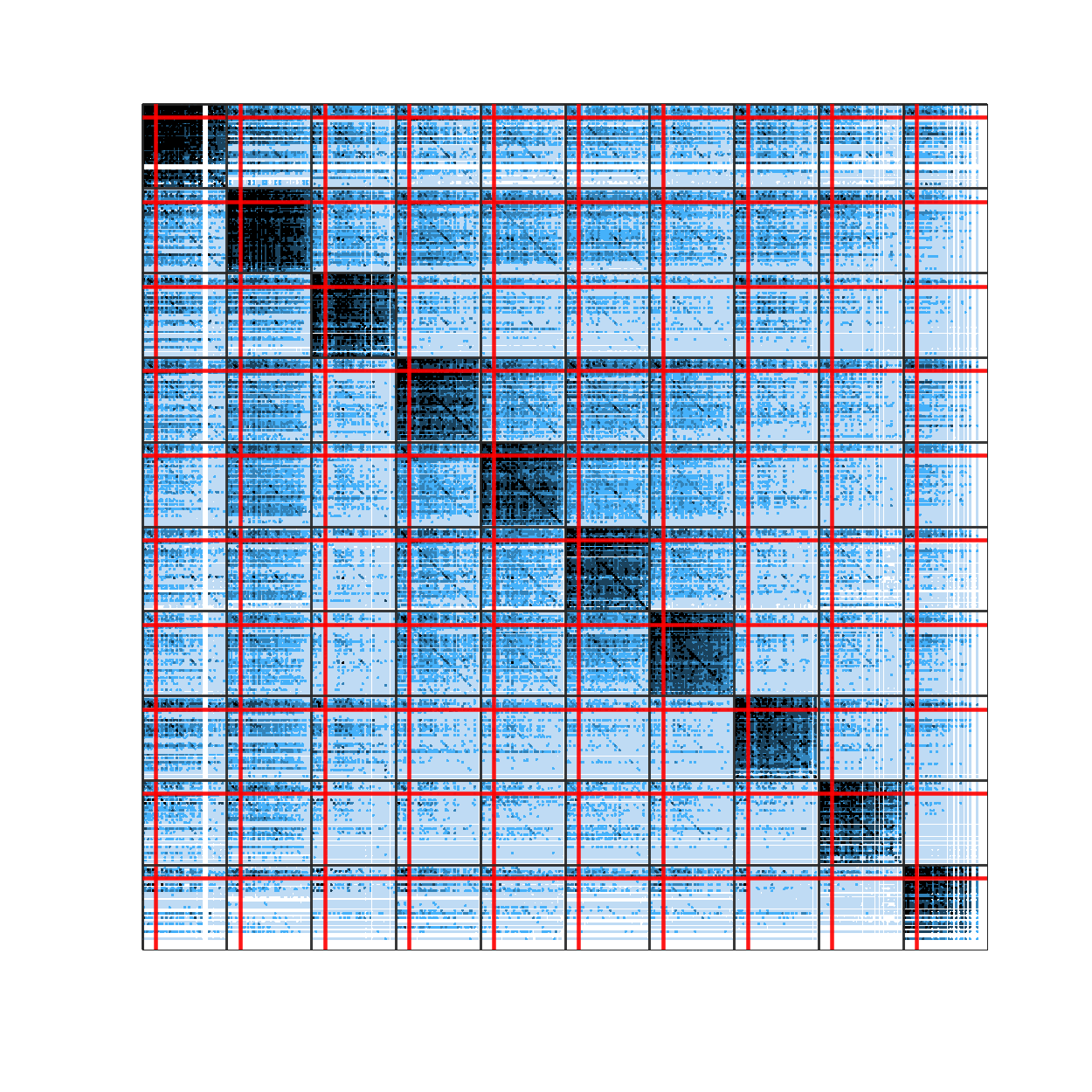}
		}
	\end{center}
	\caption{Supra-adjacency matrix plots for the WIOD world trade multilayer network of the year $2014$ for the parameters $p=q=22$.
		Dark fonts indicate large edge weights.
		The red lines in panel b) indicate the layer core size $s^\ast_{\mathrm{layer}}=7$ and the red lines in panel c) indicate the node core size $s^\ast_{\mathrm{node}}=9$ in each block.}\label{fig:WIOD_spy}
\end{figure*}

Figures \ref{fig:WIOD_cores_country} and \ref{fig:WIOD_cores_industry} show the evolution of the core layers and nodes for the WIOD world trade multilayer network over the years $2000$ to $2014$.
In this multilayer network, industries are modeled as nodes while countries are modeled as layers.
The results were obtained by the nonlinear spectral method for multilayer networks in the provably convergent setting with parameters $p=q=22$.

In addition, \Cref{fig:WIOD_spy} shows spy plots of the original supra-adjacency matrix as well as its reordered version in which blocks are permuted according to the layer coreness vector $\bm{c}$ while each block is internally permuted according to the node coreness vector $\bm{x}$.
Since the network is almost fully connected, sparse L-shapes can not be obtained.
Instead, edge weights are color-coded to assign darker fonts to larger edge weights.
Our method manages to find node and layer coreness vectors that permute blocks with larger edge weights into the outer L-shape of layers while at the same time permuting larger edge weights into the inner L-shape of nodes within each block.
Note that the diagonal blocks of the supra-adjacency matrix are dominant with respect to edge weights since intra-country trade volumes tend to be higher than inter-country ones.

\section{Efficient evaluation of the QUBO objective function}\label{sec:QUBO}

Equations \eqref{eq:QUBO_node} and \eqref{eq:QUBO_layer} define the QUBO objective functions
\begin{align}
\sum_{k,l=1}^L \max \{ \bm{c}_k, \bm{c}_l\} \sum_{{i,j=1}}^n \left( \frac{\mathcal{A}_{ij}^{kl}}{n_1^{(kl)}} \max \{ \bar{\bm{x}}_i, \bar{\bm{x}}_j\} + \frac{1 - \mathcal{A}_{ij}^{kl}}{n_2^{(kl)}} (1-\max \{ \bar{\bm{x}}_i, \bar{\bm{x}}_j\}) \right),\label{eq:QUBO_objective_node}\\
\sum_{i,j=1}^n \max \{ \bm{x}_i, \bm{x}_j\} \sum_{{k,l=1}}^L \left( \frac{\mathcal{A}_{ij}^{kl}}{n_1^{(ij)}} \max \{ \bar{\bm{c}}_k, \bar{\bm{c}}_l\} + \frac{1 - \mathcal{A}_{ij}^{kl}}{n_2^{(ij)}} (1-\max \{ \bar{\bm{c}}_k, \bar{\bm{c}}_l\}) \right).\label{eq:QUBO_objective_layer}
\end{align}
for the determination of node and layer core sizes $s^\ast_{\mathrm{node}}$ and $s^\ast_{\mathrm{layer}}$, respectively, by evaluating the objective functions for $n$ binary vectors $\bar{\bm{x}}\in\{0,1\}^n$ and $L$ binary vectors $\bar{\bm{c}}\in\{0,1\}^L$, respectively, as described in \Cref{sec:core_size}.
The term QUBO stands for ``quadratic unconstrained binary optimization'' problem as which the single-layer QUBO objective function proposed in Ref.~\cite{higham2022core} and stated in (5) in the main text was first introduced in the literature.
A naive implementation consists of $5$ nested for-loops, which becomes computationally burdensome for medium- to large-scale networks.
For example, this procedure requires several days of runtime for the OpenAlex multilayer citation network of complex network scientists for implementations in both julia and Matlab.

\begin{figure}
	\begin{algorithm}[H]%[htb!]
		%	\vspace{0.5em}
		\raggedright{
			\begin{tabular}{lll}
				%		\vspace{1mm}
				\textbf{Input}:
				& $\mathcal{A}\in\R^{n \times n \times L \times L}_{\geq 0},$ & Adjacency tensor.\\%\vspace{1mm}
				& $\bm{x}\in\R^n_{>0},$ & Optimized node coreness vector.\\
				& $\bm{c}\in\R^L_{>0},$ & Optimized layer coreness vector.\\
			\end{tabular}
		}
		\vspace{2mm}
		
		\begin{algorithmic}[1]
			\State Set $c_1=c_2=0$
			\noindent\For{$k=1:L$} \hfill\% \textbf{Preparation}
			\noindent\State Compute $\bm{e}_k^T\bm{C}$ \hfill\% Row vector of length $L$
			\noindent\State Compute $\bm{d}^{(kl)} = \mathcal{A}^{(kl)}\bm{1}_n + (\mathcal{A}^{(kl)})^T\bm{1}_n$ for $l=1,\dots,L$ \hfill\% $L$ column vectors of length $n$
			\noindent\State Compute $n_1^{(kl)}$ and $n_2^{(kl)}$ for $l=1,\dots,L$ using $\bm{d}^{(kl)}$ \hfill\% $2L$ scalars
			\noindent\State Compute $\bm{e}_k^T\bm{E}$ using $\bm{d}^{(kl)},n_1^{(kl)},$ and $n_2^{(kl)}$ \hfill\% Row vector of length $L$
			\noindent\State Compute the $k$th block row $\bm{D}_k$ of $\bm{D}$ using $\bm{d}^{(kl)},n_1^{(kl)},$ and $n_2^{(kl)}$ \hfill\% $L\times nL$ block matrix with $L$ diagonal blocks
			\noindent\State Compute the $k$th block row $\bm{A}_k$ of $\bm{A}$ using $\bm{d}^{(kl)},n_1^{(kl)},$ and $n_2^{(kl)}$ \hfill\% $L\times nL$ block matrix (typically sparse)
			\State Update $c_1 = c_1 + 2(n-1/2)\bm{e}_k^T\bm{E}\bm{1}_L$
			\State Update $c_2 = c_2 + \bm{e}_k^T\bm{E}\bm{1}_L$
			\State Clear $\bm{e}_k^T\bm{C}, \bm{e}_k^T\bm{E}, \bm{d}^{(kl)}, n_1^{(kl)}$ and $n_2^{(kl)}$ from memory
			\EndFor
			\noindent\State Set $\bm{q}=\bm{0}\in\R^n$ \hfill\% Empty vector of node QUBO scores
			\noindent\State Set $\bar{\bm{x}}=\bm{0}\in\R^n$ \hfill\% Empty binary vector
			\noindent\For{$i=1:n$} \hfill\% \textbf{Evaluation}
			\noindent\State Set $i^\ast$ as the index of the $i$th largest entry of $\bm{x}$ \hfill\% Index for the next additional non-zero entry in $\bar{\bm{x}}$
			\noindent\State $\bar{\bm{x}} = \bar{\bm{x}} + \bm{e}_{i^\ast}$ \hfill\% Update binary vector $\bar{\bm{x}}$
			\State $\bm{q}_i = \frac{1}{\sum_{k,l=1}^L \max \{ \bm{c}_k, \bm{c}_l\}} \left[ (\bm{1}_L\otimes \bar{\bm{x}})^T \bm{D} (\bm{1}_L\otimes \bar{\bm{x}}) - c_1 (\bar{\bm{x}}^T\bar{\bm{x}}) - (\bm{1}_L\otimes \bar{\bm{x}})^T \bm{A} (\bm{1}_L\otimes \bar{\bm{x}}) + c_2 (\bar{\bm{x}}^T\bar{\bm{x}})^2 \right]$
			\EndFor
			\State $s^\ast_{\mathrm{node}} = \text{argmax}_{i=1,\dots,n}~\bm{q}_i$
		\end{algorithmic}
		\vspace{2mm}
		\begin{tabular}{lll}
			%		\vspace{1mm}
			\textbf{Output}: & $s^\ast_{\mathrm{node}},$ & Node core size.\\
			& $\bm{q}_{s^\ast_{\mathrm{node}}},$ & Node QUBO score corresponding to $s^\ast_{\mathrm{node}}$.
		\end{tabular}
		\caption{Efficient QUBO objective function evaluation.}\label{alg2}
	\end{algorithm}
\end{figure}

In this section, we describe an efficient procedure for evaluating \eqref{eq:QUBO_objective_node} and \eqref{eq:QUBO_objective_layer} and summarize it in \Cref{alg2}.
For the OpenAlex network, julia or Matlab implementations of \Cref{alg2} reduces the runtime from several days to approximately $20$ minutes on a standard desktop computer with an AMD Ryzen 5 5600X 6-Core processor and 16GB memory.

Since \eqref{eq:QUBO_objective_layer} is obtained by exchanging the roles of the vectors $\bm{x}$ and $\bm{c}$ and consequently the indices $i,j$ and $k,l$ in \eqref{eq:QUBO_objective_node}, we restrict ourselves to describing our procedure in the setting of \eqref{eq:QUBO_objective_node}.
This case builds on the flattening of the fourth-order adjacency tensor \mbox{$\mathcal{A}\in\R^{n\times n\times L\times L}$} into the supra-adjacency matrix, i.e., the entry $(i,j,k,l)$ in $\mathcal{A}$ corresponds to entry $((k-1)n+i,(l-1)n+j)$ in the supra-adjacency matrix.
Equation \eqref{eq:QUBO_objective_layer} instead requires the flattening that maps the index $(i,j,k,l)$ in $\mathcal{A}$ to the entry $((i-1)L+k,(j-1)L+l)$ in the corresponding matrix representation.
In this matrix representation that has recently been introduced as the dual multilayer network \cite{presigny2024node}, blocks correspond to nodes while edges within blocks correspond to layers.

The starting point of the derivation of \Cref{alg2} is rewriting the sum over the indices $i$ and $j$ in \eqref{eq:QUBO_objective_node} as an inner product.
In fact, this sum exactly corresponds to the single-layer QUBO objective function proposed in Ref.~\cite{higham2022core} and stated in \eqref{eq:QUBO_single_layer}.
Consequently, following the same steps as in \cite[Section 3]{higham2022core} and \cite[Section 5]{bergermann2024nonlinear}, we can rewrite \eqref{eq:QUBO_objective_node} as
\begin{equation}\label{eq:QUBO_bilinear}
\sum_{k,l=1}^L \max \{ \bm{c}_k, \bm{c}_l\}~\bar{\bm{x}}^T \bm{Q}^{(kl)} \bar{\bm{x}},
\end{equation}
where
\begin{equation}\label{eq:QUBO_Qkl}
\bm{Q}^{(kl)} = \left( \frac{1}{n_1^{(kl)}} + \frac{1}{n_2^{(kl)}} \right)\text{diag}\left[ \mathcal{A}^{(kl)}\bm{1}_n + (\mathcal{A}^{(kl)})^T\bm{1}_n \right] - 2 \frac{n-1/2}{n_2^{(kl)}} \bm{I}_n - \left( \frac{1}{n_1^{(kl)}} + \frac{1}{n_2^{(kl)}} \right) \mathcal{A}^{(kl)} + \frac{1}{n_2^{(kl)}}\bm{1}_n\bm{1}_n^T,
\end{equation}
with $\mathcal{A}^{(kl)}\in\R^{n\times n}$ denoting the $(k,l)$-block of the supra-adjacency matrix (or equivalently, the $n\times n$ slice of $\mathcal{A}$ in which the indices $k$ and $l$ are fixed), $\bm{1}_n\in\R^n$ denoting the vector of all ones, and $\bm{I}_n\in\R^{n\times n}$ denoting the identity matrix.
Note that due to the non-symmetry of the adjacency tensor $\mathcal{A}$, the first diagonal term of $\bm{Q}^{(kl)}$ can generally not be simplified into the form stated in  Ref.'s~\cite{higham2022core,bergermann2024nonlinear}.

The key to the fast evaluation of \eqref{eq:QUBO_objective_node} is further rewriting \eqref{eq:QUBO_bilinear} in terms of a block matrix $\bm{Q}\in\R^{nL\times nL}$, i.e., 
\begin{equation}\label{eq:QUBO_one_bilinear_form}
\sum_{k,l=1}^L \max \{ \bm{c}_k, \bm{c}_l\}~\bar{\bm{x}}^T \bm{Q}^{(kl)} \bar{\bm{x}} = (\bm{1}_L\otimes\bar{\bm{x}})^T\bm{Q}(\bm{1}_L\otimes\bar{\bm{x}}),
\end{equation}
where
\begin{equation}\label{eq:QUBO_Q_block_matrix}
\bm{Q} = (\bm{C}\otimes\bm{I}_n) \odot \bm{D} + \bm{E}\otimes \bm{I}_n + (\bm{C}\otimes\bm{I}_n) \odot \bm{A} + \bm{E}\otimes (\bm{1}_n\bm{1}_n^T),
\end{equation}
and in which the $(k,l)$-block of size $n\times n$ represents the summand $k,l$ in \eqref{eq:QUBO_bilinear}.
Here, $\otimes$ denotes the Kronecker (tensor) product while $\odot$ denotes the Hadamard (elementwise) matrix product.
The matrices $\bm{A}\in\R^{nL\times nL}, \bm{C}\in\R^{L\times L}, \bm{D}\in\R^{nL\times nL},$ and $\bm{E}\in\R^{L\times L}$ in \eqref{eq:QUBO_Q_block_matrix} are defined as
\begin{equation*}
\bm{A} = \begin{bmatrix}
\left( \frac{1}{n_1^{(11)}} + \frac{1}{n_2^{(11)}} \right)\mathcal{A}^{(11)} & \cdots & \left( \frac{1}{n_1^{(1L)}} + \frac{1}{n_2^{(1L)}} \right)\mathcal{A}^{(1L)}\\
\vdots&\ddots&\vdots\\
\left( \frac{1}{n_1^{(L1)}} + \frac{1}{n_2^{(L1)}} \right)\mathcal{A}^{(L1)} & \cdots & \left( \frac{1}{n_1^{(LL)}} + \frac{1}{n_2^{(LL)}} \right)\mathcal{A}^{(LL)}
\end{bmatrix}, \qquad
\bm{C} = \begin{bmatrix}
\bm{c}_1 & \max\{\bm{c}_1,\bm{c}_2\} & \cdots & \max\{\bm{c}_1,\bm{c}_L\}\\
\max\{\bm{c}_2,\bm{c}_1\} & \bm{c}_2 & \cdots & \max\{\bm{c}_2,\bm{c}_L\}\\
\vdots&\vdots&\ddots&\vdots\\
\max\{\bm{c}_L,\bm{c}_1\} & \max\{\bm{c}_L,\bm{c}_2\} & \cdots & \bm{c}_L
\end{bmatrix},
\end{equation*}
\begin{equation*}
\bm{D} = \begin{bmatrix}
\left( \frac{1}{n_1^{(11)}} + \frac{1}{n_2^{(11)}} \right)\text{diag}\left[ \mathcal{A}^{(11)}\bm{1}_n + (\mathcal{A}^{(11)})^T\bm{1}_n \right] & \cdots & \left( \frac{1}{n_1^{(1L)}} + \frac{1}{n_2^{(1L)}} \right)\text{diag}\left[ \mathcal{A}^{(1L)}\bm{1}_n + (\mathcal{A}^{(1L)})^T\bm{1}_n \right]\\
\vdots&\ddots&\vdots\\
\left( \frac{1}{n_1^{(L1)}} + \frac{1}{n_2^{(L1)}} \right)\text{diag}\left[ \mathcal{A}^{(L1)}\bm{1}_n + (\mathcal{A}^{(L1)})^T\bm{1}_n \right] & \cdots & \left( \frac{1}{n_1^{(LL)}} + \frac{1}{n_2^{(LL)}} \right)\text{diag}\left[ \mathcal{A}^{(LL)}\bm{1}_n + (\mathcal{A}^{(LL)})^T\bm{1}_n \right]
\end{bmatrix},
\end{equation*}
\begin{equation*}
\bm{E} = \begin{bmatrix}
\frac{\bm{c}_1}{n_2^{(11)}} & \frac{\max\{\bm{c}_1,\bm{c}_2\}}{n_2^{(12)}} & \cdots & \frac{\max\{\bm{c}_1,\bm{c}_L\}}{n_2^{(1L)}}\\
\frac{\max\{\bm{c}_2,\bm{c}_1\}}{n_2^{(21)}} & \frac{\bm{c}_2}{n_2^{(22)}} & \cdots & \frac{\max\{\bm{c}_2,\bm{c}_L\}}{n_2^{(2L)}}\\
\vdots&\vdots&\ddots&\vdots\\
\frac{\max\{\bm{c}_L,\bm{c}_1\}}{n_2^{(L1)}} & \frac{\max\{\bm{c}_L,\bm{c}_2\}}{n_2^{(L2)}} & \cdots & \frac{\bm{c}_L}{n_2^{(LL)}}
\end{bmatrix},
\end{equation*}
where entries including $n_1^{(kl)}=0$ or $n_2^{(kl)}=0$ need to be set to zero.

Note that $\bm{A}$ is a block-weighted version of the supra-adjacency matrix, i.e., is has the same number of non-zero entries as the adjacency tensor $\mathcal{A}$.
Furthermore, the number of non-zero entries in $\bm{D}$ is upper-bounded by the number of non-zero entries in $\mathcal{A}$.
For moderate network sizes, one can now assemble the matrix $\bm{Q}$ defined in \eqref{eq:QUBO_Q_block_matrix} and evaluate \eqref{eq:QUBO_one_bilinear_form} for all binary vectors $\bar{\bm{x}}$.
However, a further increase in efficiency is possible.

For the two terms in \eqref{eq:QUBO_Q_block_matrix} involving $\bm{E}$, straightforward computations show that
\begin{equation}\label{eq:QUBO_I_scalar}
(\bm{1}_L\otimes\bar{\bm{x}})^T(\bm{E}\otimes \bm{I}_n)(\bm{1}_L\otimes\bar{\bm{x}}) = (\bar{\bm{x}}^T\bar{\bm{x}})\bm{1}^T\bm{E1}
\end{equation}
and for binary vectors $\bar{\bm{x}}$,
\begin{equation}\label{eq:QUBO_11_scalar}
(\bm{1}_L\otimes\bar{\bm{x}})^T(\bm{E}\otimes \bm{1}_n\bm{1}_n^T)(\bm{1}_L\otimes\bar{\bm{x}}) = (\bar{\bm{x}}^T\bar{\bm{x}})^2\bm{1}^T\bm{E1},
\end{equation}
where $\bar{\bm{x}}^T\bar{\bm{x}}$ corresponds to the number of non-zero entries in $\bar{\bm{x}}$.
Consequently, the matrices $\bm{E}\otimes \bm{I}_n$ and $\bm{E}\otimes \bm{1}_n\bm{1}_n^T$ never need to be formed explicitly.
In fact, storing the dense matrices $\bm{C},\bm{E}\in\R^{L\times L}$ may be impossible due to memory constraints for sufficiently large numbers of layers.
Also recall that the described procedure can be used to evaluate \eqref{eq:QUBO_objective_layer} by exchanging the roles of $\bm{x}$ and $\bm{c}$ and consequently $n$ and $L$ in the matrix sizes.
For the OpenAlex network, storing a dense $53\,423 \times 53\,423$ matrix is impossible on (currently common) hardware with 16GB of memory.
The same is true for the $L^2$ quantities $n_1^{(kl)}$ and $n_2^{(kl)}$ in the setting of \eqref{eq:QUBO_objective_node} and the $n^2$ quantities $n_1^{(ij)}$ and $n_2^{(ij)}$ in the setting of \eqref{eq:QUBO_objective_layer}.

The preparation step of \Cref{alg2} in lines $2$ to $12$ circumvents this issue by assembling the quantities $\bm{C}, \bm{E}, n_1^{(kl)},$ and $n_2^{(kl)}$ row by row.
The sparse downstream quantities $\bm{A}$ and $\bm{D}$ can then be assembled and stored block row by block row.
Similarly, we define scalar quantities $c_1=2(n-1/2)\bm{1}^T\bm{E1}$ and $c_2=\bm{1}^T\bm{E1}$ for the evaluation of \eqref{eq:QUBO_I_scalar} and \eqref{eq:QUBO_11_scalar}.
Their values are updated with the currently available information on the $k$th row of $\bm{E}$.

The actual evaluation of the QUBO objective function values via \eqref{eq:QUBO_one_bilinear_form} can then cheaply be performed in lines $15$ to $19$ of \Cref{alg2}.
The runtime reduction from several days to approximately $20$ minutes on the OpenAlex network is achieved by replacing $5$ nested for-loops by two single for-loops performing efficient vector- and matrix-valued operations.

\section{Proof of Theorem 1}\label{sec:proof}

In this section, we derive the coefficient matrix $\bm{M}$ defined in \eqref{eq:coefficient_matrix}.
Furthermore, we give detailed references to previous results from Ref.~\cite{bergermann2024nonlinear} that also hold true in the case of the nonlinear spectral method for multilayer networks and complete the proof of Theorem \ref{thm:unique_solution}.
The starting point is the following Lemma.
\begin{lemma}\label{lemma:coefficient_matrix}
	The objective function $f_{\alpha,\beta} (\bm{x},\bm{c}) = \sum_{i,j=1}^n\sum_{k,l=1}^L \mathcal{A}_{ij}^{kl} (\bm{x}_i^\alpha + \bm{x}_j^\alpha)^{1/\alpha} (\bm{c}_k^\beta + \bm{c}_l^\beta)^{1/\beta}$ satisfies the elementwise inequalities
	\begin{equation}\label{eq:theta_multilayer_setting}
	\begin{array}{cc}
	|\nabla_{\bm{x}} \nabla_{\bm{x}} f_{\alpha,\beta} (\bm{x},\bm{c})\bm{x}| \leq \bm{\Theta}_{11} |\nabla_{\bm{x}} f_{\alpha,\beta}(\bm{x},\bm{c})|
	&
	|\nabla_{\bm{c}} \nabla_{\bm{x}} f_{\alpha,\beta} (\bm{x},\bm{c})\bm{c}| \leq \bm{\Theta}_{12} |\nabla_{\bm{x}} f_{\alpha,\beta}(\bm{x},\bm{c})|
	\\
	|\nabla_{\bm{x}} \nabla_{\bm{c}} f_{\alpha,\beta} (\bm{x},\bm{c})\bm{x}| \leq \bm{\Theta}_{21} |\nabla_{\bm{c}} f_{\alpha,\beta}(\bm{x},\bm{c})|
	&
	|\nabla_{\bm{c}} \nabla_{\bm{c}} f_{\alpha,\beta} (\bm{x},\bm{c})\bm{c}| \leq \bm{\Theta}_{22} |\nabla_{\bm{c}} f_{\alpha,\beta}(\bm{x},\bm{c})|
	\end{array}
	\end{equation}
	with the coefficient matrix
	\begin{equation*}
	\bm{\Theta} = \begin{bmatrix}
	\bm{\Theta}_{11} & \bm{\Theta}_{12}\\
	\bm{\Theta}_{21} & \bm{\Theta}_{22}
	\end{bmatrix}
	=
	\begin{bmatrix}
	|\alpha-1| & 2\\
	2 & |\beta-1|
	\end{bmatrix}.
	\end{equation*}
\end{lemma}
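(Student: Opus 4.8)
The plan is to prove the four elementwise bounds in \eqref{eq:theta_multilayer_setting} by differentiating $f_{\alpha,\beta}$ directly and exploiting two structural features of \eqref{eq:objective_function}: the kernel factorizes into a node part $g_{ij}(\bm{x})=(\bm{x}_i^\alpha+\bm{x}_j^\alpha)^{1/\alpha}$ and a layer part $h_{k\ell}(\bm{c})=(\bm{c}_k^\beta+\bm{c}_\ell^\beta)^{1/\beta}$, and every weight $\mathcal{A}_{ij}^{k\ell}$ is nonnegative. Since $\nabla_{\bm{x}}f_{\alpha,\beta}=\sum_{ijk\ell}\mathcal{A}_{ij}^{k\ell}h_{k\ell}(\bm{c})\,\nabla_{\bm{x}}g_{ij}(\bm{x})$ (and analogously for the three remaining blocks), and since on the positive cone the kernels and all their first partial derivatives are nonnegative, I would first apply the triangle inequality to each Hessian-times-vector product. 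Each left-hand side of \eqref{eq:theta_multilayer_setting} then dominates a nonnegatively weighted sum over $(i,j,k,\ell)$, so it suffices to prove the inequality for one kernel factor; the weights $\mathcal{A}_{ij}^{k\ell}h_{k\ell}$ (resp.\ $\mathcal{A}_{ij}^{k\ell}g_{ij}$) reassemble precisely $\nabla_{\bm{x}}f_{\alpha,\beta}$ (resp.\ $\nabla_{\bm{c}}f_{\alpha,\beta}$) on the right-hand side.

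The heart of the argument is a short explicit computation for a single smoothed-maximum kernel. With $s=\bm{x}_i^\alpha+\bm{x}_j^\alpha$ one obtains the degree-$0$ homogeneous first derivative $\partial_{x_i}g_{ij}=\bm{x}_i^{\alpha-1}s^{1/\alpha-1}$, together with $\partial_{x_i}^2 g_{ij}=(\alpha-1)\,\bm{x}_i^{\alpha-2}\bm{x}_j^{\alpha}\,s^{1/\alpha-2}$ and the mixed derivative $\partial_{x_i}\partial_{x_j}g_{ij}=-(\alpha-1)\,\bm{x}_i^{\alpha-1}\bm{x}_j^{\alpha-1}\,s^{1/\alpha-2}$, and the fully analogous formulas for $h_{k\ell}$ with $\beta$ in place of $\alpha$. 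Euler's identity for the degree-$1$ homogeneous kernels, namely $\sum_r \bm{x}_r\,\partial_{x_r}g_{ij}=g_{ij}$ and $\sum_t \bm{c}_t\,\partial_{c_t}h_{k\ell}=h_{k\ell}$, is the device that turns a second-derivative-times-argument back into a first derivative and thereby produces the gradients appearing on the right of \eqref{eq:theta_multilayer_setting}.

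For the diagonal blocks $\bm{\Theta}_{11},\bm{\Theta}_{22}$ I would substitute the two second-derivative formulas into the relevant row of $|\nabla_{\bm{x}}\nabla_{\bm{x}}g_{ij}\,\bm{x}|$, factor out the curvature constant $|\alpha-1|$ coming from differentiating the power $(\cdot)^{1/\alpha}$ twice, and control the remaining scalar ratios using $\bm{x}_j^\alpha/(\bm{x}_i^\alpha+\bm{x}_j^\alpha)\le 1$ and the Young-type bound $\bm{x}_i^{\alpha-1}\bm{x}_j\le \bm{x}_i^\alpha+\bm{x}_j^\alpha$; combined with Euler's identity this delivers the constant $|\alpha-1|$ (resp.\ $|\beta-1|$). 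For the off-diagonal blocks $\bm{\Theta}_{12},\bm{\Theta}_{21}$ the mixed Hessian factorizes cleanly as $\nabla_{\bm{x}}g_{ij}\otimes\nabla_{\bm{c}}h_{k\ell}$, so that applying Euler's identity in the complementary variable collapses $\nabla_{\bm{c}}h_{k\ell}\cdot\bm{c}$ back to $h_{k\ell}$; bounding the contributions of the two active coordinates then yields the constant $2$. Collecting the four constants assembles the matrix $\bm{\Theta}$.

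The delicate part is the sign structure rather than the differentiation. Because the mixed node derivative $\partial_{x_i}\partial_{x_j}g_{ij}$ is \emph{negative} for $\alpha>1$, the gradient map $\nabla_{\bm{x}}f_{\alpha,\beta}$ is not order preserving, and a naive use of Euler's identity would cancel the diagonal left-hand sides to zero and yield no usable estimate. The proof must therefore retain the elementwise absolute values inside the sums and track the resulting constants with care --- in particular the factor $2$ stemming from the two coordinate roles that each node (and each layer) index plays as head and tail of an edge --- while separately disposing of the degenerate cases (self-loops $i=j$ or $k=\ell$, and vanishing denominators $s=0$) in which the kernel is linear and the corresponding second derivatives vanish. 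Once \Cref{lemma:coefficient_matrix} is established, the matrix $\bm{M}$ of \eqref{eq:coefficient_matrix} arises by combining $\bm{\Theta}$ with the homogeneity factor contributed by the normalization and power maps of \Cref{alg}, and \Cref{thm:unique_solution} then follows from multilinear Perron--Frobenius theory under $\rho(\bm{M})<1$.
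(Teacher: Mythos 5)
Your overall strategy coincides with the paper's: compute the gradients of $f_{\alpha,\beta}$ explicitly, use the nonnegativity of $\mathcal{A}_{ij}^{k\ell}$ to reduce everything to per-edge estimates on the product kernel $g_{ij}(\bm{x})\,h_{k\ell}(\bm{c})$, exploit the $1$-homogeneity of each factor, and obtain $\bm{\Theta}_{21},\bm{\Theta}_{22}$ from $\bm{\Theta}_{12},\bm{\Theta}_{11}$ by swapping the roles of $(\bm{x},\alpha)$ and $(\bm{c},\\bm{\beta})$. Your off-diagonal argument is in fact identical to the paper's: the paper expands $[\nabla_{\bm{c}}\nabla_{\bm{x}}f_{\alpha,\beta}\,\bm{c}]_m$, whose terms are all nonnegative, and bounds $\bm{c}_p^\beta/(\bm{c}_p^\beta+\bm{c}_l^\beta)\le 1$ termwise, which is precisely your step of bounding the two active layer coordinates by the full kernel, yielding $\bm{\Theta}_{12}=2$. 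For the diagonal blocks, however, the paper performs no computation of its own; it observes that the layer sum is independent of $\bm{x}$ and defers to the computations of Lemma B.1 of Ref.~\cite{bergermann2024nonlinear}, whereas you attempt the computation directly --- and there the proposal breaks down.

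The breakdown sits exactly at the point you yourself flag. For a single edge, with $s=\bm{x}_i^\alpha+\bm{x}_j^\alpha$, your (correct) formulas give in row $i$
\begin{equation*}
\bm{x}_i\,\partial_{x_i}^2 g_{ij}=(\alpha-1)\,\bm{x}_i^{\alpha-1}\bm{x}_j^{\alpha}\,s^{1/\alpha-2},
\qquad
\bm{x}_j\,\partial_{x_j}\partial_{x_i} g_{ij}=-(\alpha-1)\,\bm{x}_i^{\alpha-1}\bm{x}_j^{\alpha}\,s^{1/\alpha-2},
\end{equation*}
i.e., the two contributions have \emph{equal magnitude} and opposite sign. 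If one reads \eqref{eq:theta_multilayer_setting} literally (absolute value outside the product), they cancel: since $\nabla_{\bm{x}}f_{\alpha,\beta}$ is $0$-homogeneous in $\bm{x}$, Euler's theorem gives $\nabla_{\bm{x}}\nabla_{\bm{x}}f_{\alpha,\beta}\,\bm{x}\equiv\bm{0}$, the diagonal inequalities hold trivially, and your machinery of factoring out $|\alpha-1|$ and bounding ratios is vacuous. If instead you ``retain the elementwise absolute values inside the sums,'' as you insist one must (and as the downstream Thompson-metric contraction indeed requires), then the two contributions \emph{add}, and the resulting ratio against $[\nabla_{\bm{x}}g_{ij}]_i=\bm{x}_i^{\alpha-1}s^{1/\alpha-1}$ equals $2(\alpha-1)\bm{x}_j^\alpha/s$, whose supremum is $2|\alpha-1|$ (attained as $\bm{x}_i/\bm{x}_j\to 0$). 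No combination of $\bm{x}_j^\alpha/s\le 1$, your Young-type bound, and Euler's identity can produce $|\alpha-1|$ in this regime, because $2|\alpha-1|$ is sharp. So your plan proves either the trivial constant $0$ or the constant $2|\alpha-1|$, but not the claimed $\bm{\Theta}_{11}=|\alpha-1|$: keeping absolute values inside and obtaining $|\alpha-1|$ are mutually exclusive, and the proposal needs to commit to one reading and state which inequality it actually establishes.
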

\begin{proof}
	We start by computing entries of the gradients of $f_{\alpha,\beta}(\bm{x},\bm{c})$ with respect to the variables $\bm{x}$ and $\bm{c}$,
	\begin{align}
	[\nabla_{\bm{x}} f_{\alpha,\beta}(\bm{x},\bm{c})]_m & = \bm{x}_m^{\alpha-1} \sum_j (\bm{x}_m^\alpha + \bm{x}_j^\alpha)^{1/\alpha-1} \sum_{k,l} (\mathcal{A}_{mj}^{kl} + \mathcal{A}_{jm}^{kl}) (\bm{c}_k^\beta + \bm{c}_l^\beta)^{1/\beta},\label{eq:grad_x}\\
	[\nabla_{\bm{c}} f_{\alpha,\beta}(\bm{x},\bm{c})]_p & = \bm{c}_p^{\beta-1} \sum_l (\bm{c}_p^\beta + \bm{c}_l^\beta)^{1/\beta-1} \sum_{i,j} (\mathcal{A}_{ij}^{pl} + \mathcal{A}_{ij}^{lp}) (\bm{x}_i^\alpha + \bm{x}_j^\alpha)^{1/\alpha}.\label{eq:grad_c}
	\end{align}
	The sum $\sum_{k,l} (\mathcal{A}_{mj}^{kl} + \mathcal{A}_{jm}^{kl}) (\bm{c}_k^\beta + \bm{c}_l^\beta)^{1/\beta}$ in \eqref{eq:grad_x} does not depend on $\bm{x}$.
	Hence, the coefficient $\bm{\Theta}_{11}=|\alpha-1|$ is obtained by computations analogous to those in the proof of \cite[Lemma B.1]{bergermann2024nonlinear} where we have $(\mathcal{A}_{mj}^{kl} + \mathcal{A}_{jm}^{kl})$ instead of $2\mathcal{A}_{mj}^{kl}$ due to the non-symmetry of $\mathcal{A}$.
	Furthermore, since \eqref{eq:grad_c} is obtained by exchanging the roles of $\bm{x}$ and $\alpha$ with those of $\bm{c}$ and $\beta$, analogous arguments yield $\bm{\Theta}_{22}=|\beta-1|$.
	
	To see $\bm{\Theta}_{12}=2$, we consider
	\begin{align*}
	[\nabla_{\bm{c}} \nabla_{\bm{x}} f_{\alpha,\beta}(\bm{x},\bm{c}) \bm{c}]_m
	& = 2 \bm{x}_m^{\alpha-1} \sum_j (\bm{x}_m^\alpha + \bm{x}_j^\alpha)^{1/\alpha-1} \sum_{p,l} \bm{c}_p^\beta (\mathcal{A}_{mj}^{kl} + \mathcal{A}_{jm}^{kl}) (\bm{c}_p^\beta + \bm{c}_l^\beta)^{1/\beta-1}\\
	& = 2 \bm{x}_m^{\alpha-1} \sum_j (\bm{x}_m^\alpha + \bm{x}_j^\alpha)^{1/\alpha-1} \sum_{p,l} \frac{\bm{c}_p^\beta}{\bm{c}_p^\beta + \bm{c}_l^\beta} (\mathcal{A}_{mj}^{kl} + \mathcal{A}_{jm}^{kl}) (\bm{c}_p^\beta + \bm{c}_l^\beta)^{1/\beta}\\
	& \leq 2 \bm{x}_m^{\alpha-1} \sum_j (\bm{x}_m^\alpha + \bm{x}_j^\alpha)^{1/\alpha-1} \sum_{p,l} (\mathcal{A}_{mj}^{kl} + \mathcal{A}_{jm}^{kl}) (\bm{c}_p^\beta + \bm{c}_l^\beta)^{1/\beta}\\
	& = 2 [\nabla_{\bm{x}} f_{\alpha,\beta}(\bm{x},\bm{c})]_m,
	\end{align*}
	since $\bm{c}$ is entry-wise positive.
	Again, $\bm{\Theta}_{21}=2$ is obtained by analogous arguments after exchanging the roles of $\bm{x}$ and $\alpha$ with those of $\bm{c}$ and $\beta$.
\end{proof}

Next, we define the vector-valued fixed point map underlying the nonlinear spectral method for multilayer networks following from \eqref{eq:fixed_point_equations} as
\begin{equation}\label{eq:fixed_point_map}
G^{\alpha,\beta}(\bm{x},\bm{c}) = \begin{bmatrix}
J_{p^\ast}(\nabla_{\bm{x}} f_{\alpha,\beta}(\bm{x},\bm{c}))\\
J_{q^\ast}(\nabla_{\bm{c}} f_{\alpha,\beta}(\bm{x},\bm{c}))
\end{bmatrix}
=
\begin{bmatrix}
G^{\alpha,\beta}_1(\bm{x},\bm{c})\\
G^{\alpha,\beta}_2(\bm{x},\bm{c})
\end{bmatrix},
\end{equation}
where $J_p(\bm{x}) := \nabla \|\bm{x}\|_p = \|\bm{x}\|_p^{1-p}\bm{x}^{p-1}$ for $\bm{x}$ entry-wise positive.
With this notation, we state the following result establishing the coefficient matrix $\bm{M}$, cf.~\eqref{eq:coefficient_matrix}.

\begin{lemma}
	Let $D_{\bm{x}} G^{\alpha,\beta}_i(\bm{x},\bm{c})$ and $D_{\bm{c}} G^{\alpha,\beta}_i(\bm{x},\bm{c})$ for $i=1,2$ denote the Jacobians w.r.t.\ $\bm{x}$ and $\bm{c}$, respectively.
	For elementwise divisions, the map $G^{\alpha,\beta}$ satisfies the element\-wise inequality
	\begin{equation}\label{eq:coeff_matrix}
	\begin{bmatrix}
	\left\|\frac{D_{\bm{x}}G^{\alpha,\beta}_1(\bm{x},\bm{c})\bm{x}}{G^{\alpha,\beta}_1(\bm{x},\bm{c})}\right\|_1 & \left\|\frac{D_{\bm{c}}G^{\alpha,\beta}_1(\bm{x},\bm{c})\bm{c}}{G^{\alpha,\beta}_1(\bm{x},\bm{c})}\right\|_1\\[.7em]
	\left\|\frac{D_{\bm{x}}G^{\alpha,\beta}_2(\bm{x},\bm{c})\bm{x}}{G^{\alpha,\beta}_2(\bm{x},\bm{c})}\right\|_1 & \left\|\frac{D_{\bm{c}}G^{\alpha,\beta}_2(\bm{x},\bm{c})\bm{c}}{G^{\alpha,\beta}_2(\bm{x},\bm{c})}\right\|_1
	\end{bmatrix}
	\leq
	\begin{bmatrix}
	\frac{2|\alpha-1|}{p-1} & \frac{2}{p-1}\\[.7em]
	\frac{2}{q-1} & \frac{2|\beta-1|}{q-1}
	\end{bmatrix}
	=: \bm{M}.
	\end{equation}
\end{lemma}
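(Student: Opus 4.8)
The plan is to differentiate the fixed-point map $G^{\alpha,\beta}$ from \eqref{eq:fixed_point_map} block by block and reduce each of the four logarithmic-derivative quantities on the left-hand side of \eqref{eq:coeff_matrix} to a composition of (i) the Jacobian of the normalization map $J_{p^\ast}$ (resp.\ $J_{q^\ast}$) and (ii) a second-derivative block of $f_{\alpha,\beta}$ that is already controlled by \Cref{lemma:coefficient_matrix}. Concretely, since $G^{\alpha,\beta}_1 = J_{p^\ast}(\nabla_{\bm x} f_{\alpha,\beta})$, the chain rule gives $D_{\bm x} G^{\alpha,\beta}_1\,\bm x = DJ_{p^\ast}(\nabla_{\bm x} f_{\alpha,\beta})\,(\nabla_{\bm x}\nabla_{\bm x} f_{\alpha,\beta}\,\bm x)$ and $D_{\bm c} G^{\alpha,\beta}_1\,\bm c = DJ_{p^\ast}(\nabla_{\bm x} f_{\alpha,\beta})\,(\nabla_{\bm c}\nabla_{\bm x} f_{\alpha,\beta}\,\bm c)$, and analogously for $G^{\alpha,\beta}_2 = J_{q^\ast}(\nabla_{\bm c} f_{\alpha,\beta})$. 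This mirrors the two-step structure of the single-block argument in Ref.~\cite{bergermann2024nonlinear}, with \Cref{lemma:coefficient_matrix} supplying the first step.

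First I would compute $DJ_{p^\ast}$ explicitly. Writing $s = p^\ast-1 = 1/(p-1)$ and $J_{p^\ast}(\bm y) = \|\bm y\|_{p^\ast}^{-s}\bm y^{s}$ for entry-wise positive $\bm y$, a direct differentiation shows that $DJ_{p^\ast}(\bm y)$ splits into a diagonal ``power'' part and a rank-one ``normalization'' part, so that for any $\bm w$
\[
\frac{[DJ_{p^\ast}(\bm y)\,\bm w]_m}{[J_{p^\ast}(\bm y)]_m} = \frac{1}{p-1}\left(\frac{\bm w_m}{\bm y_m} - \sum_n \mu_n\frac{\bm w_n}{\bm y_n}\right), \qquad \mu_n = \frac{\bm y_n^{p^\ast}}{\|\bm y\|_{p^\ast}^{p^\ast}},
\]
where the weights $\mu_n$ are nonnegative and sum to one. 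This isolates the scalar factor $1/(p-1)$ (resp.\ $1/(q-1)$) common to every entry of $\bm M$ and reduces the task to bounding the centered ratios $\bm w_m/\bm y_m$, with $\bm y = \nabla_{\bm x} f_{\alpha,\beta}$ and $\bm w$ the relevant Hessian-vector product.

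The remaining step substitutes the bounds of \Cref{lemma:coefficient_matrix}. For the off-diagonal entries the cross Hessian $\partial_{\bm c_\rho}\partial_{\bm x_m} f_{\alpha,\beta}$ is entry-wise nonnegative --- this is precisely the positive expression computed in the proof of \Cref{lemma:coefficient_matrix} --- so the ratio $\bm w_m/\bm y_m$ lies in $[0,\bm{\Theta}_{12}]=[0,2]$; since the weighted mean lies in the same interval, the centered quantity is bounded in magnitude by $2$, giving the entry $\frac{2}{p-1}$ and, by the $(\bm x,\alpha,p)\leftrightarrow(\bm c,\beta,q)$ symmetry of \eqref{eq:objective_function}, the entry $\frac{2}{q-1}$. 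For the diagonal entries the same-block Hessian $\nabla_{\bm x}\nabla_{\bm x} f_{\alpha,\beta}$ is sign-indefinite, so only the magnitude bound $|\bm w_m/\bm y_m|\le\bm{\Theta}_{11}=|\alpha-1|$ is available; the power term and the normalization term then each contribute up to $|\alpha-1|$ in magnitude, producing an extra factor two and hence the entries $\frac{2|\alpha-1|}{p-1}$ and $\frac{2|\beta-1|}{q-1}$. Taking the norm over the output index yields \eqref{eq:coeff_matrix}.

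I expect the main obstacle to be exactly this last bookkeeping: tracking how the signed rank-one normalization term in $DJ_{p^\ast}$ interacts with the sign pattern of each Hessian block, which is what creates the asymmetry between the diagonal entries (factors $2|\alpha-1|$, $2|\beta-1|$) and the off-diagonal entries (factor $2$). Establishing the nonnegativity of the cross blocks and the sign-indefiniteness of the diagonal blocks, and verifying that the loss of cancellation in the diagonal case costs no more than a factor two, is the delicate part; everything else reduces to the explicit $J_{p^\ast}$ computation and a re-use of \Cref{lemma:coefficient_matrix}, the only change from the symmetric single-layer case being the appearance of $\mathcal{A}_{mj}^{kl}+\mathcal{A}_{jm}^{kl}$ in place of $2\mathcal{A}_{mj}^{kl}$.
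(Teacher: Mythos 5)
Your two structural ingredients are correct, and they are the same ones on which the proof referenced by the paper (that of \cite[Lem.~B.2]{bergermann2024nonlinear}) is built: the chain rule $D_{\bm x}G^{\alpha,\beta}_1\,\bm x = DJ_{p^\ast}(\nabla_{\bm x}f_{\alpha,\beta})\,(\nabla_{\bm x}\nabla_{\bm x}f_{\alpha,\beta}\,\bm x)$ and the splitting of $DJ_{p^\ast}$ into a diagonal part plus a rank-one part; your explicit formula for $DJ_{p^\ast}$ is right. The genuine gap is in what you then bound. The objective \eqref{eq:objective_function} is positively $1$-homogeneous in $\bm x$ and in $\bm c$ separately, so Euler's identity gives $\nabla_{\bm x}\nabla_{\bm x}f_{\alpha,\beta}(\bm x,\bm c)\,\bm x = 0$ and $\nabla_{\bm c}\nabla_{\bm x}f_{\alpha,\beta}(\bm x,\bm c)\,\bm c = \nabla_{\bm x}f_{\alpha,\beta}(\bm x,\bm c)$; moreover $J_{p^\ast}$ is $0$-homogeneous, so $DJ_{p^\ast}(\bm y)\bm y=0$ (your own formula shows this, since the two terms are $1$ and $\sum_n\mu_n=1$). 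Consequently the vectors you estimate, $D_{\bm x}G^{\alpha,\beta}_1\bm x$ and $D_{\bm c}G^{\alpha,\beta}_1\bm c$, are \emph{identically zero}: your inequalities are true but vacuous, and a vacuous bound cannot feed the contraction argument behind \Cref{thm:unique_solution}. The non-trivial content of \eqref{eq:coeff_matrix} --- this is how the $\|\cdot\|_1$ and the word ``elementwise'' must be read --- is a bound, for each output index $m$, on the $1$-norm over the \emph{input} index of the scaled Jacobian row, $\sum_n\bigl|[D_{\bm x}G^{\alpha,\beta}_1]_{mn}\bigr|\,\bm x_n/[G^{\alpha,\beta}_1]_m$, i.e.\ the Thompson/Hilbert-metric Lipschitz constants of $G^{\alpha,\beta}_1$ and $G^{\alpha,\beta}_2$. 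Since $D_{\bm x}G^{\alpha,\beta}_1$ has entries of both signs, these row sums are not controlled by $|D_{\bm x}G^{\alpha,\beta}_1\bm x|$; the inequality between the two quantities goes the wrong way.

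Repairing the argument requires two ideas missing from your write-up. First, the bounds of \Cref{lemma:coefficient_matrix} must be used in entrywise form: for $\alpha>1$ the block $\nabla_{\bm x}\nabla_{\bm x}f_{\alpha,\beta}$ has nonnegative diagonal and nonpositive off-diagonal entries, so the Euler identity (zero row sums after column scaling by $\bm x$) yields $\sum_n\bigl|[\nabla_{\bm x}\nabla_{\bm x}f_{\alpha,\beta}]_{mn}\bigr|\bm x_n = 2\,[\nabla_{\bm x}\nabla_{\bm x}f_{\alpha,\beta}]_{mm}\bm x_m \le 2|\alpha-1|\,[\nabla_{\bm x}f_{\alpha,\beta}]_m$, while nonnegativity of the cross block gives $\sum_\rho [\nabla_{\bm c}\nabla_{\bm x}f_{\alpha,\beta}]_{m\rho}\bm c_\rho = [\nabla_{\bm x}f_{\alpha,\beta}]_m$ exactly; your instinct to exploit the sign of the cross block is the right one, but you apply it to the degenerate matrix-vector product rather than to these row sums. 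Second, the rank-one part of $DJ_{p^\ast}$ cannot be handled by the triangle inequality on row sums --- that would give $4|\alpha-1|/(p-1)$ on the diagonal of $\bm M$, not $2|\alpha-1|/(p-1)$. The stated constant is recovered because the rank-one term contributes the \emph{same} number to every output component (a uniform shift in logarithmic coordinates), so it cancels in the oscillation (Hilbert-type) seminorm with respect to which contractivity is measured in the nonlinear Perron--Frobenius framework \cite{gautier2019perron,gautier2023nonlinear}. Without the Euler/sign analysis and this cancellation, the proposal does not establish \eqref{eq:coeff_matrix} in the form needed for \Cref{thm:unique_solution}.
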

\begin{proof}
	The proof is analogous to that of \cite[Lemma B.2]{bergermann2024nonlinear}.
\end{proof}

Finally, Theorem \ref{thm:unique_solution} is proven analogously to \cite[Theorem 3.1]{bergermann2024nonlinear} for the coefficient matrix $\bm{M}$ defined in \eqref{eq:coeff_matrix}.
The proof relies on showing the contractivity of the fixed point map $G^{\alpha,\beta}$ with respect to a suitably defined Thompson metric with Lipschitz constant $\rho(\bm{M})$, i.e., the spectral radius of $\bm{M}$ determines the linear rate of convergence of Algorithm \ref{alg}.

\twocolumngrid

\end{document}